\newcommand{\indic}[1]{\ensuremath{ \mathbbm{1}_{#1} } }
\newcommand{\mytilde}[1]{\ensuremath\accentset{\sim}{ #1 }}
\newtheorem{Theorem}{Theorem} 
\newsavebox\myboxA
\newsavebox\myboxB
\newlength\mylenA
\newcommand*\xoverline[2][0.75]{%
    \sbox{\myboxA}{$\m@th#2$}%
    \setbox\myboxB\null
    \ht\myboxB=\ht\myboxA%
    \dp\myboxB=\dp\myboxA%
    \wd\myboxB=#1\wd\myboxA
    \sbox\myboxB{$\m@th\overline{\copy\myboxB}$}
    \setlength\mylenA{\the\wd\myboxA}
    \addtolength\mylenA{-\the\wd\myboxB}%
    \ifdim\wd\myboxB<\wd\myboxA%
       \rlap{\hskip 0.5\mylenA\usebox\myboxB}{\usebox\myboxA}%
    \else
        \hskip -0.5\mylenA\rlap{\usebox\myboxA}{\hskip 0.5\mylenA\usebox\myboxB}%
    \fi}
\journal{Chaos: An Interdisciplinary Journal of Nonlinear Science}
\begin{document}

\begin{frontmatter}

\title{Application of the interacting particle system method  to  piecewise deterministic Markov processes  used in reliability} 

\author[edf]{H. CHRAIBI }
\author[edf]{A. DUTFOY }

\author[edf,lpma]{\emph{T. GALTIER}\corref{cor1} }
\cortext[cor1]{Corresponding author }  
\ead{tgaltier@gmail.com} 

\author[X]{J. GARNIER }

\address[edf]{EDF R\&D - D\'epartement PERICLES,
7 Boulevard Gaspard Monge,  91120 Palaiseau, France}
\address[lpma]{ Universit\'e Paris-Diderot -  Laboratoire~de~Probabilit\'es~Statistique~et~Mod\'elisation,
 75205 Paris Cedex 13, France } 
\address[X]{ Ecole Polytechnique -  Centre de Math\'ematiques Appliqu\'ees, 91128 Palaiseau Cedex, France}

\begin{abstract}
Variance reduction methods are often needed for the reliability assessment of complex industrial systems, we focus on one variance reduction method in a given context, that is the interacting particle system method (IPS) used on piecewise deterministic Markov processes (PDMP) for reliability assessment . The PDMPs are a very large class of processes which benefit from high modeling capacities, they can model almost any Markovian phenomenon that does not include diffusion. In reliability assessment, the PDMPs modeling industrial systems  generally involve low jump rates and  jump kernels favoring one safe arrival, we call such model a "concentrated PDMP".

Used on such concentrated PDMPs, the IPS is inefficient and does not always provide a variance reduction. Indeed, the efficiency of the IPS method relies on simulating many different trajectories during its propagation steps,  but unfortunately concentrated PDMPs are  likely to  generate the same deterministic trajectories over and over.  We propose an adaptation of the IPS method called IPS+M that reduces this phenomenon. 
The IPS+M consists in modifying the propagation steps of the IPS, by conditioning  the propagation to avoid generating the same trajectories multiple times. We prove that, compared to the IPS, the IPS+M method always provides an estimator with a lower variance.  We also carry out a quick simulation study on a two-components system that confirms this result.
\end{abstract}

\begin{keyword}
Rare event -- Reliability assessment -- PDMP or PDP -- Dynamical hybrid system -- PycATSHOO -- Sequential Monte-Carlo Samplers -- Feymann-Kac particle filters
\end{keyword}

\end{frontmatter}
\newpage


\section{Introduction}

For both safety and regulation issues, the reliability of industrial systems has to be assessed. The considered systems (dams or nuclear power plants, for instance) are complex dynamic hybrid systems, so only simulation  methods can be reasonably considered to assess their reliability.
The failure of such a dynamic hybrid system generally corresponds to a physical variable of the system (temperature, pressure, water level) entering a critical region. The simulation of such a system requires to accurately model the trajectory of the physical variables. The evolution of these physical variables are generally determined by simple differential equations derived from the laws of physics. As the physical context giving the differential equations generally depends on the statuses of the multiple components of the systems (on, off, or failed),   the differential equations can change whenever there is a change in the components statuses. To encounter for this hybrid interplay between the discrete process of components' statuses and the continuous evolution of the physical variables, we model the evolution of the state of a system by a piecewise deterministic Markov process (PDMP)  \cite{davis1993markov,zhang2015numerical,de2015numerical,cloez2017probabilistic}. PDMPs   are meant to represent a large class of Markovian processes that do not include diffusion, as such, they benefit from high modeling capacity: they can model many industrial systems. For instance, EDF has recently developed the {PyCATSHOO} toolbox \cite{Pycatshoo2}, which allows the modeling of dynamic hybrid systems, the main option within this toolbox is to evaluate the dependability criteria of the studied systems by  Monte Carlo simulations. \\
  
  As the industrial systems are highly reliable, their failure is a rare event, and the naive Monte-Carlo  simulation method (MC) is computationally too intensive in this context. The objective of our work is to set up new algorithms to accelerate the reliability assessment of such industrial systems.  To do so we want to use   faster methods, such as variance reduction methods.
 A variance reduction method is a method that yields an unbiased estimator with a smaller variance than the Monte-Carlo estimator. The estimation being more accurate, we need less simulation runs to reach the desired accuracy, thus we save computational time. The variance reduction is generally achieved by altering the simulation process, and then   correcting the  bias induced by this modification of simulation process by appropriately weighting each simulation.  Several acceleration methods for variance reduction can be proposed, such as importance sampling methods or particle filter methods (also called subset simulations). In this article we focus on one example of particle filter method: the interacting particle system (IPS)  method \cite{del2005genealogical}.\\

 Unlike in importance sampling methods, in the IPS method we keep simulating according to the original system. Another difference is that we do not simulate directly the trajectories on the entire observation duration,  the observation duration is subdivided into small intervals of time and we simulate the trajectories sequentially one interval of time after the other one.
 These sequential simulations of the trajectories consist alternating between an exploration step and a selection step. During the exploration step we simulate the trajectories on a small time interval, therefore exploring the most probable trajectories on a short horizon of time. Then we apply a selection step on these trajectories  replicating the trajectories which seem "close" to failure  and  giving up  the less "promising" trajectories. At the next exploration step, only   replicated   trajectories are continued, before the next selection, and the next exploration and so on... until each selected trajectory reaches the end of the observation time window. This way the effort of simulation is concentrated on selected trajectories, which have higher chance of becoming a failing trajectory before the end of the observation time window. In the end we get more failing trajectories to fuel our estimation, and, if the selection was well done, the IPS method yields an unbiased estimator with a smaller variance than the MC estimator. \\
 
   When we try to apply the IPS method in order to estimate the failure probability of such  reliable  complex hybrid systems, the IPS methods  turn out to be inefficient.  Indeed the estimation provided by the IPS method often has, in this case,  a higher variance than the one of the Monte-Carlo estimator.  IPS does not perform well in this context because the application case (a reliable  complex hybrid system) makes it hard to conduct the exploration steps of the IPS method  efficiently. Indeed, an industrial system is often modeled  by what we call a "concentrated PDMP", which  is a PDMP with  low jump rates  and   concentrated  jump kernels on boundaries.    The typical jump rate  is low because it is the sum of the failure rates of the components in working condition and of the repair rates of the failed components which are all very low. These failure and repair rates are very low  because the components are reliable and their repairs are slow. The typical jump kernel is concentrated because  most of the probability mass of a jump is concentrated on one (safe) output. Indeed the jump kernel on boundaries model the automatic control mechanisms within the system. During such  a control mechanism there is a small probability that some component(s) fail on demand but the most likely output is that the system jumps on the safe state aimed by the control mechanism, consequently the  probability mass of the jump kernel is concentrated on this output.  
   Due to  these characteristics of the model there is a high probability that no component failure or repair  occurs during the short exploration time, and with  a PDMP it means that all the trajectories are likely to  follow the same deterministic paths. So when we explore the trajectory space  most simulated trajectories end up being the same one, hence limiting our exploration of the trajectory space. To avoid this pitfall, a particular filter was proposed in \cite{whiteley2011monte} that enhances the occurrence of random jumps (failure or repairs) or modifies  the occurrence time of the last jump.  However the proposed method is limited to a  different case of PDMP. This class of PDMP does not include   concentrated PDMP, as it contains only PDMP without boundaries which allows continuous jumps kernels,  i.e. it does not allow to model automatic control mechanisms in components.  Moreover, an a priori bound on the number of jumps in a time interval is required, but in our case we do not have such information.
   
   In order to adapt the IPS to concentrated PDMPS, we propose instead to use an approach based on the memorization method developed in \cite{labeau1996probabilistic}. The idea is to start the exploration by finding the most likely trajectories continuing each batch of replicated trajectories. Then, we condition the rest of the exploration to avoid these trajectories. As a result, the simulated trajectories have much more chances to differ which improves the quality of the exploration and reduce the variance of the estimator. 
   To correct the bias induced by this modification of the simulation process, we have to modify the weight of each trajectory.
   We call our adaptation of the IPS to PDMP the IPS+M   for sequential Monte-Carlo sampler with memorization method. \\

The rest of the paper paper is organized as follows: Section \ref{sec:model} is dedicated to the presentation of our model of the system, Section \ref{sec:IPS} presents the IPS method and introduces the optimal potential functions for this method, 
Section \ref{sec:IPS+} presents the IPS+M method which  adapts the IPS algorithm for PDMPs with low jump rates, Section \ref{sec:Memo} explains how to force the differentiation of the trajectories using the memorization method, and finally in Section \ref{sec:Res} we illustrate the better efficiency of the IPS+M method on a toy example.

\section{A model of the system based on a PDMP with discrete jump kernel} 
\label{sec:model}

\subsection{The model}
We denote by $Z_t$ the state of the system at time $t$. $Z_t$ is the combination of the physical variables of the system, noted $X_t$, and of the statuses of all the components within the system, noted $M_t$: $Z_t=(X_t,M_t)$.  We consider that $X_t\in\mathbb R^d$, and that $M_t\in\mathbb M=\{On,Off,F\}^{N_c}$ where $F$ corresponds to a failed status, and $N_c$ is the number of components in the system. The value of $M_t$ is sometimes referred as the  mode of the system. Here we consider only three categories for the status of a component : $On,Off,F$, but it is possible to include more categories as long as the set of the possible modes $\mathbb M$ stays countable.

The process $Z_t$ is piecewise continuous, and each discontinuity  is called a jump.  Between two jumps there is no change in the components' statuses, and the dynamics of the physical variables  can be expressed thanks to an ordinary differential equation derived from the law of physics: $$\frac{d\,X_t}{d t}= F_{M_t} (X_t).$$
We note $\phi_{(x,m)}(t)$ the solution of this equation when $X_0=x$ and $M_0=m$. Then for any time $s>0,\,$ if $T$ is the time separating $s$ from  the next jump time, we have  $$\forall t\in[0,T) ,\quad Z_{s+t} =\big( X_{s+t},M_s\big)=  \big(\phi_{(X_s,M_s)} (t),M_s\big).$$ Similarly, a flow function on the states can be defined. If $ z=(x,m) $, then we define $ \Phi_{z}(t)=\big(\phi_{(x,m)} (t),m\big)$, and so \begin{equation}
  \forall t\in[0,T) ,\  Z_{t+s} =\Phi_{Z_s}(t) .\label{eq:flow}
\end{equation}

As the physical variables are often continuous, the jumps are essentially used to model changes in the statuses of the components. These jumps can occur for two reasons.\\ Firstly, a  jump can correspond to an automatic control mechanism (See Figure \ref{sautdet}). In a given mode $m$, such mechanism is typically triggered when the physical variables cross some threshold. For each mode $m$, we define an open and connected set  $\Omega_m$, so that these thresholds  determine its boundary $\partial \Omega_m$. So when  $M_t=m$ the value of the physical variables $X_t$ is restricted to the set $\Omega_m\subset \mathbb R^d$. Letting $E_m=\{(x,m), x\in\Omega_m\}$ be the set of the possible states with mode $m$, in terms of state a jump  associated to a control mechanism is triggered whenever the state $Z_{s+t}$ hits the boundary of $E_m$. The set of possible states is therefore defined by: $$E\,=\underset{m\in\mathbb{M}}{\bigcup} E_m.$$ \begin{figure}[t]\centering
\includegraphics[width= 0.8\linewidth]{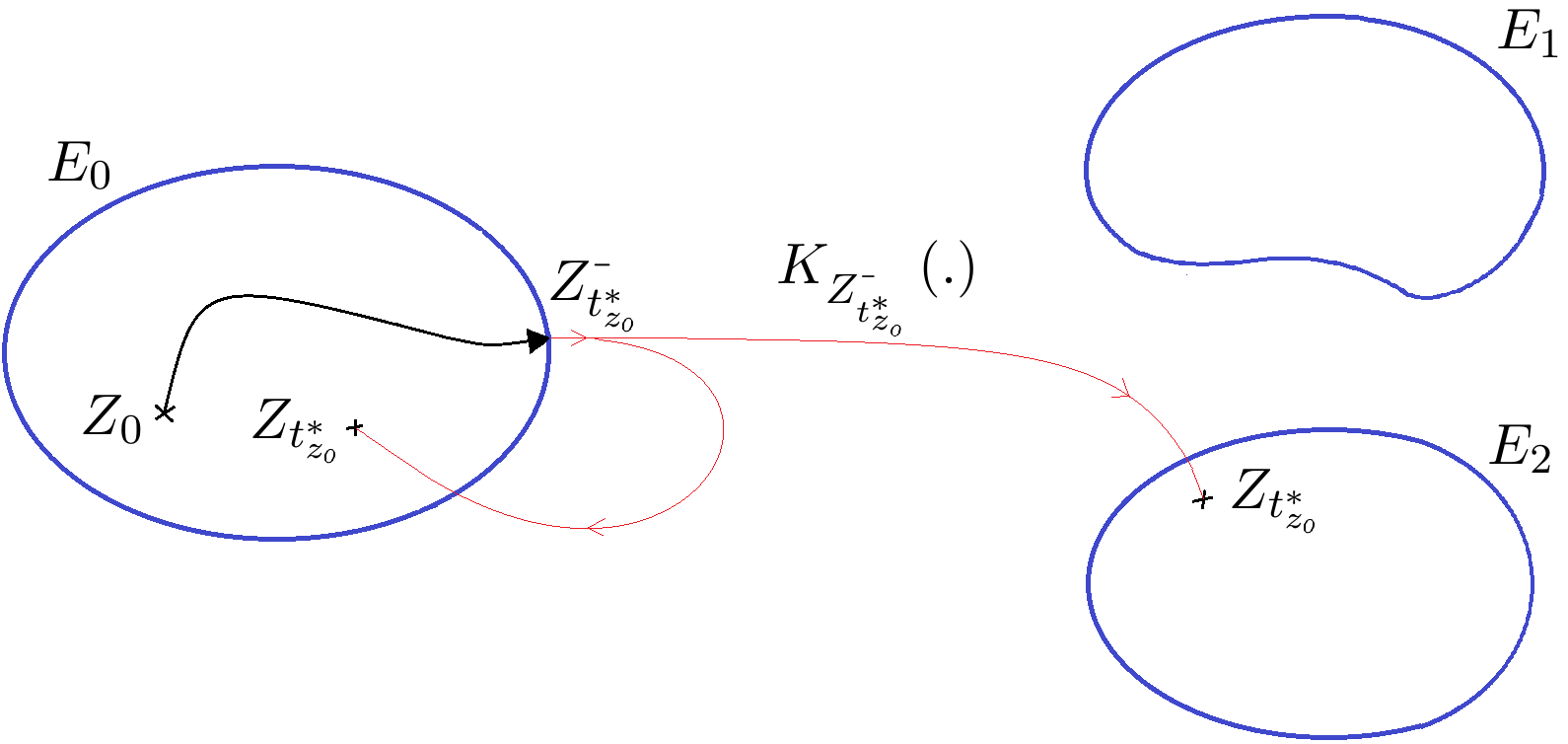}
\caption{A jump at boundary.\label{sautdet}}
\end{figure}  \\Secondly, jumps can correspond to a spontaneous failure or repair. In such case the jump occurs before $Z_{s+t}$ hits the boundary of $E_m$.  \begin{figure}[h] \centering
 \includegraphics[width=0.8\linewidth]{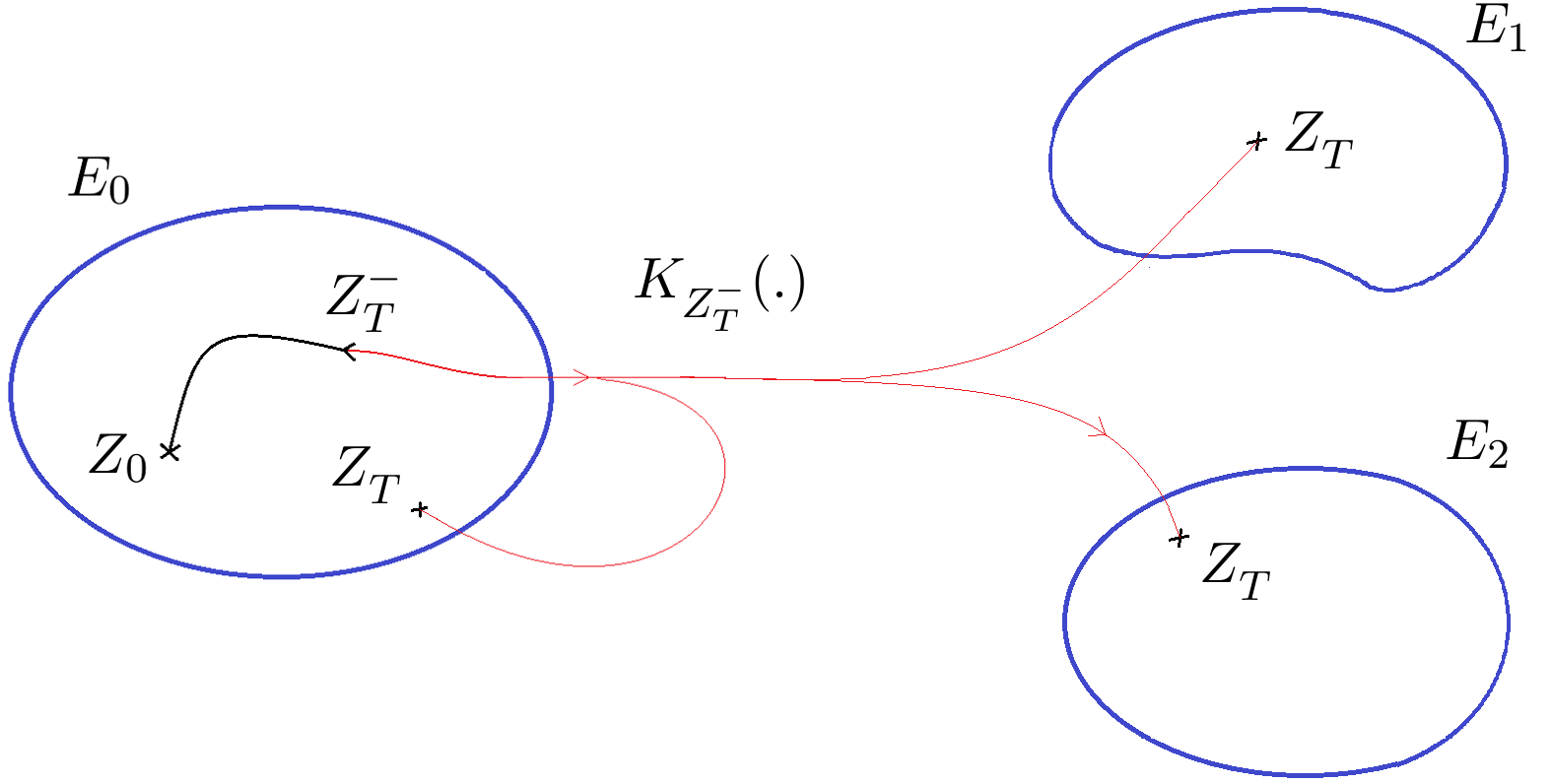}
 \caption{A spontaneous jump.\label{sautalea}} 
 \end{figure}The occurrence time of a spontaneous jump is modeled by a jump rate $\lambda(Z_{s+t})$. For instance, when we consider  that components fail or are repaired one at a time, this jump rate is the sum of the failure rates of the components in working condition and of the repair rates of the broken components.  In a more general case, each transition from a mode $m$ to a mode $m^+$ has its own rate $ \lambda_{m\to m^+}(X_{s+t})$, and we have \begin{equation}
 \lambda(Z_{s+t})=\sum_{m^+\in\mathbb M}\lambda_{M_s\to m^+}(X_{s+t})
\label{eq:lambda}
\end{equation}
 We  define the cumulative jump rate $\Lambda_{z}(t)$ by $\Lambda_{z}(t)=\int_0^t \lambda\big(\Phi_z(u)\big)du$, so that  $\forall t\in[0,T) ,\  \Lambda_{Z_s}(t) =\int_0^t \lambda\big(Z_{s+u}\big)du$. Eventually the cumulative distribution function (cdf) of $T$ (the time until the next jump starting form a state $Z_s=z$)  takes the form:\begin{equation}
\mathbb P (T\leq t|Z_s=z)=\left\{\begin{array}{cr}
1-\exp\left[-\Lambda_{z}(t)\right] & \mbox{ if }t<t^{*}_z\, ,\\
1& \mbox{ if }t\geq t^{*}_z\, .\\
\end{array}\right.
\label{Tlaw}
\end{equation}
Here   $t^{*}_z=\inf\{t>0,\Phi_z(t)\in \partial E_m\}$ is the time until the flow   hits the boundary starting from a state $z=(x,m)$. When there is no boundary i.e. $\{s>0,\Phi_z(s)\notin E_m\}=\emptyset$, we take the convention that $t^{*}_z= + \infty\ $.   With this definition, the law of $T$ has a continuous part associated to spontaneous jumps (spontaneous failures and repairs) and a discrete part associated to forced jumps (control mechanisms). A possible reference measure for $T$ knowing $Z_s=z$ is then
 \begin{align}
 \forall B\in \mathcal B(\mathbb R^+),\quad \mu_{z}(B) &= leb\left(B \cap( 0,t^*_z)\right)+\indic{t^*_z<\infty}\, \delta_{t^*_z}(B)\ , \label{measureMu}
 \end{align}
 where $leb(.)$ corresponds to the Lebesgue measure. \\
If a jump occurs at time $S$, then the distribution of the destination of the jump is expressed by a transition kernel $\mathcal K_{Z_S^{\text{-}}}$ where $Z_S^{\text{-}}$ is the departure state of the jump. If $ \xoverline{ E\,}$ is the closure of $E$, we have $Z_S^{\text{-}}\in\xoverline{ E\,}$. Let $Z_S^+\in E$ be the arrival state, and  $\mathscr B(E)$ be the Borelian $\sigma$-algebra on $E$. Then, the law of a jump from a departure state $z^{\text{-}}$ is  defined by:\begin{align}
 \forall B\in \mathscr B (E),\qquad \mathbb P \left(Z_T^+\in B|Z_T^{\text{-}}=z^{\text{-}}\right) &= \mathcal K_{z^{\text{-}}}(B).
 \end{align} 
For any departure state $  z^{\text{-}}\in\xoverline{ E\,}, $ we define  $x_a(z^{\text{-}},m^+)$   as the 
arrival for the vector of physical variables when the transition $m^{\text{-}} \to m^+$ is triggered from a state $z^{\text{-}}$. We   define by $z_a(z^{\text{-}},m^+)=\big(x_a(z^{\text{-}},m^+),m^+\big)$ the arrival state when the transition $m^{\text{-}} \to m^+$ is triggered from a state $z^{\text{-}}$.  Using this definition, for any departure state $  z^{\text{-}}\in\xoverline{ E\,}, $ we define  the   measure $ \nu_{z^{\text{-}}}$ on $\big(E,\mathscr B(E)\big)$ by:\begin{equation}
   \forall B\in \mathscr B(E),\qquad \nu_{z^{\text{-}}}(B)=\sum_{m^+\in \mathbb M}\delta_{ z_a(z^{\text{-}} ,m^+)}(B).
\end{equation}
The kernel $\mathcal K_{z^{\text{-}}}$ is absolutely continuous with respect to the measure  $ \nu_{z^{\text{-}}}$. Denoting by $K_{z^{\text{-}}}$ its density with respect to $ \nu_{z^{\text{-}}}$ we have:
 \begin{align}
\mathbb P \left(Z_T^+\in B|Z_T^{\text{-}}=z^{\text{-}}\right) &=\int_B K_{z^{\text{-}}}(z^+)\,d \nu_{z^{\text{-}}}(z^+) \  .\label{Kernel} 
\end{align}
 Note   that we only consider systems for which  $\nu_{z^{\text{-}}}$ is discrete, whatever the departure state $z^{\text{-}}\in E$ may be. This hypothesis of discreteness of $\nu_{z^{\text{-}}}$ is mandatory to apply the method presented in Section \ref{sec:IPS+}.
 
  If $\nu_z$ has a Dirac contribution at point $z$, then the kernel must satisfy $K_z(z)=0$ , so it is not possible to jump on the departure state.  In some applications the physical variables are continuous, so that $K_{z^{\text{-}}}(z)$ is zero whenever $ x^-\neq x $. In some cases, one might want the model to include renewable and aging components. Then the vector $X_t$ should include the time since the last renewal of such component, and so the vector $X_t$ can be discontinuous at the time of a renewal. In both  situations, when $z^{\text{-}}$ is not on  the boundary of $E_{m^{\text{-}} }$, the jump kernel has this form:
\begin{align}
  \forall z^{\text{-}} \in E, \quad  B\in \mathscr B(E),\qquad   \mathcal  K_{z^{\text{-}}}(B)&=\sum_{m^+\in \mathbb M}\frac{\lambda_{m^{\text{-}} \to m^+}(x^{\text{-}})}{ \lambda(z^{\text{-}})} \delta_{ z_a(z^{\text{-}} ,m^+)}(B),  \label{eq:K_noboundary} \\
  \mbox{and}\qquad K_{z^{\text{-}}}(z^+)&= \frac{\lambda_{m^{\text{-}} \to m^+}(x^{\text{-}})}{ \lambda(z^{\text{-}})}\indic{x^+=x_a(z^{\text{-}},m^+)} .
\end{align}  When $z^{\text{-}}\in\partial E_{m^{\text{-}} }$, a control mechanism is triggered : some components  are required to turn on, or to turn off, so that the system reaches a desired state $z_c$. The transition to this state is usually very likely, so when $z^{\text{-}}\in\partial E_{m^{\text{-}} }$, the jump kernel $K_{z^{\text{-}}}$ tends to concentrate all or a big portion of the probability mass on this state $z_c$. In industrial systems there can be components which have a small probability to fail when they are asked to turn on. This phenomenon is referred as a failure on demand. We denote by $\kappa_i(z^-)$ the probability that the  $i^{th}$ component fails on demand when the control is triggered. We denote by $fod(i,m^{\text{-}},m)$   the indicator being equal to one if the $i^{th}$ component fails on demand during the transition from $m^{\text{-}}$ to $m$, and to zero otherwise. We also define   $ask(z^{\text{-}},z^+)$ as the set gathering the indices of the   components supposed to turn on during the control mechanism triggered from  $z^{\text{-}}$ to $z^+$. Finally, when $z^{\text{-}}\in\partial E_{m^{\text{-}} }$, we have that:
\begin{align}
  K_{z^{\text{-}}}(z^+)=
  \indic{x^+=x_a(z^{\text{-}},m^+)}
  \prod_{i\in ask(z^{\text{-}},z^+)}  
  \left(\kappa_i(z^-)\right)^{fod (i,m^{\text{-}},m^+)} \left(1-\kappa_i(z^-)\right)^{(1-fod (i,m^{\text{-}},m^+)) }    \label{eq:K_boundary}.
\end{align}\\

To generate  $\mathbf Z_t=(Z_s)_{s\in[0,t]}$ a trajectory  of the states of the system, one can repeat the following steps: Starting with $s=0$,
\begin{enumerate}
    \item Given a starting state $Z_s=z$, generate $T$ the time until the next jump using \eqref{Tlaw},
    \item  Follow the flow $\Phi$ until $s+T$ using  \eqref{eq:flow}, and set the departure state of the next jump as being $Z_T^{\text{-}} =\Phi_z(T)$,
    \item generate $Z_{s+T}$ the arrival state of the next jump using $K_{Z_T^{\text{-}}}$
    \item repeat starting with $s=s+T$ until one gets a trajectory of size $t$.
\end{enumerate}   Defined in this way, the process $Z_t$ is Markovian \cite{davis1984}. \\

\subsection{Law of trajectory}

Let us denote by $n(\mathbf Z_{t})$ the number of jumps in the trajectory $\mathbf Z_{t}$, and by $S_k$   the time of the $k^{th}$ jump in the trajectory $\mathbf Z_{t}$  (with the convention that $S_0=0$ and $S_{n(\mathbf Z_{t})+1}=t$).   $T_k=S_{k+1}-S_k$ denotes the time between two consecutive jumps. We define the $\sigma$-algebra $\mathscr S_t$ as the $\sigma$-algebra generated by the sets in 
 $\underset{\ n\in \mathbb N^* }{\bigcup} \mathscr B\Big(\Big\{\big(z_{k},t_k\big)_{k\leq n}\in(E\times\mathbb R^{*}_\text{+})^{n},\, \overset{n}{\underset{i=0}{\sum}} t_i=t\Big\}\Big)$, where $\mathscr B(.)$ indicates the Borelians of a set. Letting $\Theta_{t} :  \mathbf Z_{t}\to \left((Z_k,T_k)\right)_{0\leq k\leq n(\mathbf Z_{t})}$ be the application giving the skeleton of a trajectory, we can define the law of trajectories as an image law through $\Theta_{t}$. We have that for  $B\in \mathscr S_t$: 
 \begin{align}
 \mathbb P_{z_o}\Big(\mathbf{Z_t} \in \Theta_t^{-1}( B) \Big) = &\int_{B} \ \prod_{k=0}^{n} \Big(\lambda_{z_{k} }(t_k) \Big)^{\mathbbm{1}_{ t_k< t^*_{z_{k}}}}\exp\Big[-\Lambda_{z_{k}}(t_k)\Big] \prod_{k=1}^{n }K_{z_{k}^-}(z_{k})\nonumber\\
&\quad\times d\delta_{t^*_{n}}(t_n)\ d\nu_{z_{n }^-}(z_n)\ d\mu_{z_{n-1}}(t_{n-1}) \ ...\ d\nu_{z_{ 1}^-}(z_1)\ d\mu_{z_{o}}(t_{0})\ , 
\label{eq:loitraj}
 \end{align}
 where $z_j^-=\Phi_{z_{j-1}}(t_{j-1})$, and $t^*_{n}=t-\sum_{i=0}^{n-1}t_i$.\\
In the rest of the paper we will denote  $\mathbf E_{t}$ the set of trajectories of size $t$ that  satisfy \eqref{eq:flow}, and by $\mathscr M(\mathbf E_{t})$  the set of bounded $ (\Theta_t^{-1}(\mathscr S_t), \mathbf E_{t})$-measurable functions.
 
\subsection{A hybrid reference measure for the trajectories of PDMP}
As they involve hybrid reference measures for the times between jumps, PDMPs are very degenerate processes. As a result, some of their realizations   have a strictly positive probability to be simulated, but some other do not.   The trajectories  that concentrate a part of the probability mass on their own, i.e. that   verify $\mathbb P\big(\mathbf Z_{t}=\mathbf z_t\big)>0$,  are called the preponderant trajectories. For example,  assuming that $\lambda$ is positive, if the trajectory $\mathbf z_t$ involves no jump, it is preponderant  because we have $\mathbb P\big(\mathbf Z_{t}=\mathbf z_t\big)=1-\exp\big(-\Lambda_{z_0}(t)\big)>0$.  Also, as we consider that the reference measures $\nu_{z^{\text{-}}}$ are discrete, there can be an other type of preponderant trajectories, which are the trajectories that only  jump  on the boundaries $\partial E_m$. Indeed, such trajectories $\mathbf z_t$ would verify \begin{align}
\mathbb P\big(\mathbf Z_{t}=\mathbf z_t\big)= \prod_{k=0}^{n}    \exp\Big[-\Lambda_{z_{s_k}}(t^*_{z_{s_k}})\Big] \prod_{k=1}^{n }K_{z_{s_k}^-}(z_{s_k})>0,
\end{align} where $s_k$ is the time of the $k^{th}$ jump in $\mathbf z_t$. Conversely,   some realizations can be considered as negligible as they verify  $\mathbb P\big(\mathbf Z_{t}=\mathbf z_t\big)=0$. These trajectories are the ones that involve a spontaneous jump, i.e. a jump starting from the interior of a set $E_m$.  \\
This can be better understood by looking at the equation \eqref{eq:loitraj} which shows that the measure $\zeta_{z_0,t}$ defined by :
 \begin{align}
 \forall B\in \mathscr S_t ,\quad \zeta_{z_0,t} (\Theta^{-1}( B))= & \underset{ \mbox{\hspace{-12ex} } (z_{_k},t_{_k})_{k\leq n}
 \in B}{\int\quad d\delta_{t^*_{n}}(t_n)\ d\nu_{z_{n }^-}(z_n)}\ d\mu_{z_{n-1}}(t_{n-1}) \ ...\ d\nu_{z_{ 1}^-}(z_1)\ d\mu_{z_{o} }(t_{0}) 
 \label{eq:zeta} 
\end{align}
where $z_j^-=\Phi_{z_{j-1}}(t_{j-1})$, and $t^*_{n}=t-\sum_{i=0}^{n-1}t_i$ is a reference measure for the law of trajectories. This reference measure  highlights the fact that the trajectories with no jump or with only jumps on boundaries can concentrate some probability mass on their own, because they refer to a Dirac contribution of the measure $\zeta_{z_0,t}$. Indeed  for such trajectories  the times between two consecutive jumps verify $t_k=t_{z_{s_k}}^*$ $\forall k < n$, and so, they always refer to the discrete part of the measures   $\mu_{z_{k-1}}$  $\forall k < n$, therefore such trajectories are related to the discrete part of $\zeta_{z_0,t}$.  Equation \eqref{eq:zeta} also shows that  the remaining  probability mass is  distributed continuously among the trajectories with at least one spontaneous jump. Indeed, in a negligible trajectory, if for example the $k+1^{th}$ jump is a spontaneous jump, then $t_k$  relates to the continuous part of the reference measure $\mu_{z_{k-1}}$.\\ 

\subsection{Concentrated PDMP}
\label{subseq:ConcentratedPDMP}
For reliability assessment of a highly reliable system  one often models the  system by a PDMP with low jump rates and concentrated jump kernels on the boundaries. Indeed, the components of the system are often reliable and their repair takes time, hence the low jump rates, and as failures on demand during a control mechanism are unlikely the jump kernels on boundaries are concentrated on one safe arrival state (i.e. the state aimed by the control mechanism). We call this kind of PDMP a concentrated PDMP, mainly because the law of one trajectory concentrates a big part of its probability mass. \\

This trajectory happens to be the   trajectory with  no failure and no repair. As jump rates are low   the probability of not having a spontaneous jump is close to one. For instance at a $k+1$-th jump this probability verifies  $$\mathbb P_{z_{s_{k}}} \left(T_k=t^*_{z_{s_{k}}}\right)=\exp\Big[-\Lambda_{z_{s_{k}}}(t^*_{z_{s_{k}}})\Big] \simeq 1.$$ So only jumps on boundaries are   likely, and when the process hits a boundary $\partial E_m$, the arrival state aimed by the control mechanism is very likely. Denoting by $z_{s_k}$ this state for a $k$-th jump we have :  $$K_{z_{s_k}^-}(z_{s_k})\simeq 1.$$ So if $\mathbf z_t$ is a  trajectory with  no failure and no repair of reasonable size we have:\begin{align}
\mathbb P\big(\mathbf Z_{t}=\mathbf z_t\big)= \prod_{k=0}^{n}    \exp\Big[-\Lambda_{z_{s_k}}(t^*_{z_{s_k}})\Big] \prod_{k=1}^{n }K_{z_{s_k}^-}(z_{s_k}) \simeq 1.
\end{align}  
 In this article we adapt the IPS method for concentrated PDMPs.  It is interesting to make the connection between our work and the modified particle filter developed in \cite{fearnhead2003line} which can be applied to the related context of discrete time processes with discrete spaces. 
 
\subsection{Reliability assessment}
Let $t_f$ be an observation time. We denote by $\mathbf E_{t_f}$ the set of trajectories of size $t_f$ that verify \eqref{eq:flow}, and $h\in \mathscr M(\mathbf E_{t_f})$. The methods presented in this article can be used for the estimation of any quantity $p_h$ defined by: $$p_h=\mathbb E[h(\mathbf Z_{t_f})].$$ 
  We are interested in estimating the probability, noted $p_{\mathscr D}$, that the system fails before the final observation time $t_f$ knowing it was initiated in a state $z_0$. Letting  $\mathscr D$ be the set of trajectories of length $t_f$ which pass through the critical region $D \subset E$, we have  $$ p_{\mathscr D}=\mathbb P_{z_0}\big(\mathbf Z_{t_f}\in \mathscr D\big)=\mathbb E_{z_0}\big[\indic{\mathscr D}( \mathbf Z_{t_f} )\big]. $$ 
 Although our application relates  to the case  where $h=\indic{\mathscr D}$, the IPS method will be presented with an arbitrary (bounded) function $h$.

\section{The IPS method}
\label{sec:IPS}
\subsection{A Feynman-Kac model}
For a measure $\eta$ and a bounded measurable function $f$ we note $\eta(f)=\int f\,d\eta$.  For a measure $\eta$ and a kernel $V$,  $\eta V$ denotes the measure such that  $\eta V(h)=\int \int h(y)V(dy|x)d\eta(x)$, and for a bounded measurable function $f$, $V(f)$ is  the function such that $V(f)(x)=\int h(y)V(dy|x)$.  

Consider a subdivision of the interval $[0,t_f]$ into $n$ sub-intervals of equal lengths,  noted $[\tau_k,\tau_{k+1})$, and such that $0=\tau_0<\tau_1<\dots<\tau_{n\,\text{-}1}<\tau_n=t_f$.  
Let $V_k $ be the  Markovian transition measure extending a trajectory of size $\tau_k$  into a trajectory of size $\tau_{k+1} $, such that \begin{equation}
    \forall \mathbf z_{\tau_k}\in \mathbf E_{\tau_{k}},\  B\in\boldsymbol{\mathcal E}_{\tau_{k+1}},\qquad \mathbb P\left(\mathbf Z_{\tau_{k+1}}\in B|\mathbf Z_{\tau_{k}}=\mathbf z_{\tau_{k}}\right) =V_k(B|\mathbf z_{\tau_k}).
\end{equation} 

For each $k< n$  we denote   $ G_k$ the  potential function  on $\mathbf E_{\tau_k}$, such that: \begin{align}\forall \mathbf z_k\in\mathbf E_k,\quad G_k(\mathbf z_k)&\geq 0 .\end{align}
 These potential functions are the main inputs of the method. The choice of the potential functions is important, because it will ultimately determine the variance of the estimator of $p_h$ provided by the IPS method. A good choice depends on the system and on the target function $h$.
The potential functions are used to define the target probability measures  $\tilde{\eta}_k$ for each $k\leq n$ , such that: \begin{align}
 \tilde{\eta}_k(d\mathbf z_{\tau_k}) &\propto \prod_{s=0}^k G_s(\mathbf z_{\tau_{s}})  \prod_{s=0}^{k-1}  V_{s }( d\mathbf z_{\tau_{s+1}}|\mathbf z_{\tau_{s}} ),  \label{eq:target}
\end{align} 
or equivalently 
\begin{align}
\forall B \in \boldsymbol{\mathcal E_{\tau_{k}}},\qquad  \tilde{\eta}_k(B) &=\frac{\mathbb E\left[ \indic{B}(\mathbf Z_{\tau_{k}})\prod_{s=0}^k G_s(\mathbf Z_{\tau_{s}})\right] }{\mathbb E\left[ \prod_{s=0}^k G_s(\mathbf Z_{\tau_{s}}) \right]}.  \label{eq:target1}
\end{align}
  Originally the IPS method comes from filtering methods. Filtering methods aim at estimating the target measures and, in these methods, the potential functions are chosen so that the $\eta_k$ match the target measures. But in our cases we have no interest in estimating the target measures, we only want to estimate $p_h$. So the potential functions can be chosen more freely. They are used to propose a probabilistic representation of $p_h$ in terms of a selection+mutation dynamics, which makes it possible to build an estimator of $p_h$ with a reduced variance.  

We define the propagated target measures $\eta_k$ such that $\eta_0=\tilde{\eta}_0$ and for $k\geq 0$, $\eta_{k+1}=\tilde{\eta}_kV_k$. We have  : \begin{align}
  \eta_{k+1}( d\mathbf z_{\tau_{k+1}}) &\propto  \prod_{s=0}^{k} G_s(\mathbf z_{\tau_{s}})  \prod_{s=0}^{k }  V_{s }( d\mathbf z_{\tau_{s+1}}|\mathbf Z_{\tau_{s}} )    \label{eq:targetProp},
\end{align}
or   equivalently
\begin{align}
\forall B \in \boldsymbol{\mathcal E_{\tau_{k+1}}},\qquad  \eta_{k+1}(B) &=\frac{\mathbb E\left[ \indic{B}(\mathbf Z_{\tau_{k+1}})\prod_{s=0}^k G_s(\mathbf Z_{\tau_{s}})\right] }{\mathbb E\left[ \prod_{s=0}^k G_s(\mathbf Z_{\tau_{s}}) \right]}.  \label{eq:targetProp1}
\end{align} 
For $k=0$ we consider that $\eta_0=\delta_0$, but the methods would still be valid if we had $\eta_0 \neq \delta_0$.
We define $Q_{k}$ such that for $f\in\mathscr M(\mathbf E_{\tau_{k+1}})$, 
$$
Q_k(f)(\mathbf Z_{\tau_k})=\int_{\mathbf E_{\tau_{k+1}}}f(\mathbf z_{\tau_{k+1}})\mathbf V_k(d\mathbf z_{\tau_{k+1}}|\mathbf Z_{\tau_{k}})G_k(\mathbf Z_{\tau_{k}})$$ and set  $Q_{k,n}=Q_kQ_{k+1} \dots Q_n$. Let $\Psi_k$ be the application that transforms a measure $\eta$ defined on $\mathbf E_{\tau_k}$ into a measure $\Psi_k(\eta)$ defined on $\mathbf E_{\tau_k}$ as follows:  \begin{equation}
    \Psi_k(\eta)(f)=\frac{\int G_k(\mathbf z)f(\mathbf z) d\eta(\mathbf z) }{\eta(G_k)}.
\end{equation}
We say that $\Psi_k(\eta)$ gives the selection of $\eta$ through the potential $G_k$. Notice that $\tilde{\eta}_k$ is the selection of $\eta_k$ as $\tilde{\eta}_k=\Psi_k(\eta_k)$. The target distributions can therefore be built according to the following pattern, in which a propagation step follows a selection step:
$$\eta_k\overset{\Psi_k}{\xrightarrow{\mbox{\hspace{1cm}}}} \tilde{\eta}_k\overset{.V_k}{\xrightarrow{\mbox{\hspace{1cm}}}} \eta_{k+1}.$$
We also define the associated unnormalized  measures $\tilde{\gamma}_k$ and $\gamma_{k+1}$, such that for $f\in\mathscr M(\mathbf E_{\tau_{k}})$:
\begin{equation}
  \tilde{\gamma}_{k}(f)   = \mathbb E\left[f(\mathbf Z_{\tau_{k}})\prod_{s=0}^{k} G_s(\mathbf Z_{\tau_{s}})    \right]  \quad \mbox{and}\quad \tilde{\eta}_k(f) =\frac{\tilde{\gamma}_{k}(f)}{\tilde{\gamma}_{k}(1)}, \label{eq:unnorm1}
\end{equation}  and for $f\in\mathscr M(\mathbf E_{\tau_{k+1}})$:
\begin{equation}
  \gamma_{k+1}(f) =\mathbb E\left[f(\mathbf Z_{\tau_{k+1}})\prod_{s=0}^{k} G_s(\mathbf Z_{\tau_{s}})    \right]   \quad \mbox{and} \quad \eta_{k+1}(f)=\frac{\gamma_{k+1}(f)}{\gamma_{k+1}(1)}. \label{eq:unnorm2}
\end{equation}
Denoting  $f_h(\mathbf Z_{\tau_{n}})=\frac{h( \mathbf Z_{\tau_{n}} )}{\prod_{s=0}^{n-1} G_s(\mathbf Z_{\tau_{s}}) }$, notice that we have:
\begin{equation}
    p_h=\gamma_n(f_h)= \eta_n(f_h) \overset{n-1}{\underset{{k=0}}{\prod}}  \eta_k\big(G_k \big )\label{eq:ph}.
\end{equation}

\subsection{The IPS algorithm and its estimators}

The IPS method provides an algorithm to  generate weighted samples which approximate the probability measures $\eta_{k}$ and $ \tilde{\eta}_k$  respectively for each step $k$.  For the sample approximating $ \eta_{k }$, we denote $\mathbf Z_{\tau_k}^j$  the $j^{th}$ trajectory and   $ W_{k}^j$ its weight.  Respectively, for the sample approximating $ \tilde{\eta}_k$, we denote $\mytilde{\mathbf Z}_{\tau_k}^j$   the $j^{th}$ trajectory and $\mytilde{W}_{k}^j$  its associated weight. For simplicity reasons, in this paper, we  consider that the samples all contain $ N$  trajectories, but it is possible to modify the sample size at each step, as illustrated in \cite{lee2018variance}. 
The  empirical approximations of $\eta_{k}$ and $ \tilde{\eta}_k$   are denoted by $\eta_{k}^N$ and $ \tilde{\eta}_k^N$  and are defined by:
\begin{equation}
    \tilde{\eta}_k^{N}  =  \sum_{i=1}^{N}  \,\mytilde W_k^i\, \delta_{  \mytilde{\mathbf Z}_{\tau_k}^i} \quad\mbox{ and }\quad
    \eta_k^{N}  =  \sum_{i=1}^{N}  \,W_k^i\, \delta_{  \mathbf Z_{\tau_k}^i}\,.\label{eq:estiEta}
\end{equation}
So for all $k\leq n$ and $f\in \mathscr M( \mathbf E_{\tau_k})$,
\begin{equation}
    \tilde{\eta}_k^{N}  (f) =  \sum_{i=1}^{N}  \,\mytilde{W}_k^i\, f \big( \mytilde{\mathbf Z}_{\tau_k}^i \big) \quad\mbox{and}\quad
    \eta_k^{N} (f) =  \sum_{i=1}^{N}  \,W_k^i\, f \big( \mathbf {Z}_{\tau_k}^i \big)\,.\label{eq:estiEta2}
\end{equation}
By plugging these estimations into equations \eqref{eq:unnorm1} and \eqref{eq:unnorm1}, we get estimations for the unnormalized distributions. Denoting by  $\tilde{\gamma}_k^{N}$ and   $\gamma_k^{N}$ these estimations, for all $k\leq n$ and $f\in \mathscr M( \mathbf E_{\tau_k})$, we  have:
\begin{equation}
    \tilde{\gamma}_k^{ N }  (f) = \tilde{\eta}_k^{N }(f)  \prod_{s=0}^{k-1} \eta_s^{N }(G_s)
    \quad \mbox{ and } \quad 
    \gamma_k^{  N } (f) =\eta_k^{N }(f)                   \prod_{s=0}^{k-1} \eta_s^{N}(G_s).
    \label{eq:estigamma}
\end{equation}
Plugging the estimations  $\eta_k^N$ into equation \eqref{eq:ph}, we get an estimator $\hat p_h$ of $p_h$ defined by:
\begin{equation}
\hat p_h = \eta_{n}^{N}(f_h)  \overset{n-1}{\underset{{k=0}}{\prod}}    \eta_k^{N}\big(G_k \big ) . \label{eq:hatp}
\end{equation}\\
\begin{figure}[h]\fbox{
    \begin{algorithm}[H]
   \textbf{Initialization :} $ k=0, \ \forall j =1..N, \ \mathbf Z_0^j\overset{i.i.d.}{\sim} \eta_0$ and  $ W_0^j=\frac 1 N $, and $\mytilde W_{0}^j=\frac{  G_0( \mathbf Z_{0}^j)}{\sum_{s }    G_0( \mathbf Z_{0}^s)}$  \\
 \While{ $k<n$}{ 
   \textbf{Selection:}\\
   Sample $ (\tilde N_k^j)_{j=1..N }\sim Mult\big(   N , (\mytilde  W_{k}^j)_{j=1..N }\big)$\\
   $\forall j:=1.. N  ,\ \mytilde W_{k}^j:= \frac 1 {  N }$  \\
   
   \textbf{Propagation :} \\
 \For{$j:=1.. N$}{ Sample $\mathbf{Z}_{\tau_{k+1}}^j$ from $ \mathbf V_{k+1}(. | \mytilde{\mathbf{Z}}_{\tau_{k}}^j) $\\
						set $W_{k+1}^j=\mytilde W_k^j$ }
 \For{$i:=1.. N$}{Set $\mytilde W_{k+1}^i=\frac{W_{k+1}^i G_{k+1}( \mathbf Z_{\tau_{k+1}}^i)}{\sum_{j }  W_{k+1}^j G_{k+1}( \mathbf Z_{\tau_{k+1}}^j)}$  }
    \eIf{$\forall j,\ \mytilde W_{k+1}^j=0$}{$\forall q>k, $ set $ \eta_q^N=\tilde \eta_q^N=0$ and Stop }{$k:=k+1$}
   	}\end{algorithm}}
\caption{IPS algorithm}
\label{fig:algo1}
\end{figure} 
The algorithm builds the samples sequentially, alternating between a selection step and a propagation step.  The $k^{th}$ selection step transforms the   sample $(\mathbf Z_k^j,W_k^j)_{j\leq N}$ into the sample $(\mytilde{\mathbf Z}_k^j,\mytilde{W}_k^j)_{j\leq N}$. This  transformation is done with a multinomial resampling scheme: the $\mytilde{\mathbf Z}_k^j$'s are drawn with replacement from the sample $ (\mathbf Z_k^j)_{j\leq N}$, each trajectory $ \mathbf Z_k^j$ having a probability $ \frac{W_{k}^j G_k( \mathbf Z_{\tau_{k}}^j)}{\sum_{i=1}^N  W_{k}^i G_k( \mathbf Z_{\tau_{k}}^i)}$ to be drawn each time. We denote by $A^j_k$ the ancestor index of the $j^{th}$ trajectory in the selected sample, such that $\mytilde{\mathbf Z}_k^j=\mathbf Z_{\tau_{k}}^{A^j_k}$. We let  $\tilde N_k^j=card\{i,A^i_k=j\}$ be the number of times the particle $ \mathbf Z_k^j$ is replicated in the sample $(\mytilde{\mathbf Z}_k^j,\mytilde{W}_k^j)_j$, so  $ N =\sum_{j=1}^{N } \tilde N_k^j$.  After this resampling  the weights $\mytilde W_k^j$ are set to $\frac 1 N$. The interest of this selection by resampling is that it discards low potential trajectories and replicates high potential trajectories. So the selected sample focuses on trajectories that will have a greater impact on the estimations of the next distributions  once extended. \\
If one specifies potential functions that are not positive, there can be a possibility that at a step $k$ we get $\forall j, G_k(\mathbf Z_{\tau_{k}}^j)=0 , $ and so  the probability for resampling cannot be defined. When this is the case, the algorithm  stops and we consider that $\forall s\geq k$ the measures $\tilde{\eta}_s^N$ and $ \eta_s^N$ are equal to the null measure.\\
Then the $k^{th}$ propagation step transforms the   sample $(\mytilde{\mathbf Z}_k^j,\mytilde{W}_k^j)_{j\leq N}$, into the sample \linebreak $(\mathbf Z_{k+1}^j,W_{k+1}^j)_{j\leq N}$. Each  trajectory $ \mathbf Z_{k+1}^j $  is obtained by extending the trajectory   $ \mytilde{\mathbf Z}_k^j $  on the interval $[\tau_k,\tau_{k+1})$  using the transition kernel $V_k$. The weights satisfy $W_{k+1}^j=\mytilde{W}_k^j ,\ \forall j$. Then the procedure is iterated until the step $n$. The full algorithm to build the samples is displayed in Figure \ref{fig:algo1}.\\

For the sake of simplicity, we will make the following assumption: 
$\exists\,\varepsilon_1,\varepsilon_2\in\mathbb R^+$ such that $\forall \,\mathbf Z_{\tau_k} \in \mathbf E_{\tau_k}$:\begin{equation}
      \varepsilon_1>G_k( \mathbf Z_{\tau_k})>\varepsilon_2>0 \label{eq:(G)}\tag{(G)},
 \end{equation} 

\begin{Theorem}
 When (G) is verified 
 the estimator \eqref{eq:hatp} is  unbiased and strongly consistent. \label{consitant}
\end{Theorem}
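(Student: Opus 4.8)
The plan is to establish both properties by induction on the time step $k=0,\dots,n$, along the lines of the classical Feynman--Kac particle-system analysis. Assumption (G) enters at three points that I will use freely: it guarantees that the algorithm never stops prematurely, because $\eta_k^N(G_k)\geq\varepsilon_2>0$ at every step; it ensures that each $Q_k$ maps bounded measurable functions to bounded measurable functions, since $0<\varepsilon_2<G_k<\varepsilon_1$; and it makes $f_h=h/\prod_{s=0}^{n-1}G_s$ bounded, with $\|f_h\|_\infty\leq\|h\|_\infty/\varepsilon_2^{\,n}$. I introduce the filtrations $\mathcal F_k=\sigma\big(\mathbf Z_{\tau_0}^{1:N},\dots,\mathbf Z_{\tau_k}^{1:N}\big)$ (the mutated particles up to step $k$) and $\mathcal G_k=\sigma\big(\mathcal F_k,\tilde{\mathbf Z}_{\tau_k}^{1:N}\big)$ (after the selection at step $k$), and I record the two elementary identities on which the algorithm rests: the multinomial resampling gives $\mathbb E\big[\tilde\eta_k^N(g)\mid\mathcal F_k\big]=\Psi_k(\eta_k^N)(g)=\eta_k^N(G_k\,g)/\eta_k^N(G_k)$, and the mutation step makes the $\mathbf Z_{\tau_{k+1}}^j$ conditionally independent given $\mathcal G_k$ with $\mathbf Z_{\tau_{k+1}}^j\sim V_k(\cdot\mid\tilde{\mathbf Z}_{\tau_k}^j)$, so that $\mathbb E\big[\eta_{k+1}^N(f)\mid\mathcal G_k\big]=\tilde\eta_k^N\big(V_k(f)\big)$.

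For unbiasedness I would show by induction that $\mathbb E\big[\gamma_k^N(f)\big]=\gamma_k(f)$ for every $k\leq n$ and every bounded measurable $f$, where $\gamma_k^N(f)=\eta_k^N(f)\prod_{s=0}^{k-1}\eta_s^N(G_s)$. The base case $k=0$ is the law of large numbers in expectation, since the $\mathbf Z_{\tau_0}^{j}$ are i.i.d.\ with law $\eta_0$. For the step $k\to k+1$, I first take the conditional expectation given $\mathcal G_k$, then given $\mathcal F_k$, and insert the two identities above; after simplifying the telescoping factor $\eta_k^N(G_k)$ this gives
\[
\mathbb E\big[\gamma_{k+1}^N(f)\mid\mathcal F_k\big]=\eta_k^N\big(G_k\,V_k(f)\big)\prod_{s=0}^{k-1}\eta_s^N(G_s)=\gamma_k^N\big(Q_k(f)\big),
\]
so that $\mathbb E\big[\gamma_{k+1}^N(f)\big]=\mathbb E\big[\gamma_k^N(Q_k f)\big]=\gamma_k(Q_k f)=\gamma_{k+1}(f)$; the last equality is the Markov identity $\gamma_k\big(G_k\,V_k(f)\big)=\gamma_{k+1}(f)$, immediate from the definitions of $\gamma_k,\gamma_{k+1}$ in \eqref{eq:unnorm2} and of $V_k$. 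Taking $k=n$, $f=f_h$, and using $\hat p_h=\gamma_n^N(f_h)$ together with $p_h=\gamma_n(f_h)$ from \eqref{eq:ph}, yields $\mathbb E[\hat p_h]=p_h$.

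For strong consistency I would show, again by induction on $k$, that $\eta_k^N(f)\to\eta_k(f)$ and $\tilde\eta_k^N(f)\to\tilde\eta_k(f)$ almost surely for every bounded measurable $f$. The base case is the strong law of large numbers. Assuming the claim holds at step $k$: for the selection, write $\tilde\eta_k^N(f)=\big(\tilde\eta_k^N(f)-\Psi_k(\eta_k^N)(f)\big)+\Psi_k(\eta_k^N)(f)$, where the deterministic term $\Psi_k(\eta_k^N)(f)=\eta_k^N(G_k f)/\eta_k^N(G_k)$ tends a.s.\ to $\eta_k(G_k f)/\eta_k(G_k)=\tilde\eta_k(f)$ by the induction hypothesis, the denominator remaining bounded below by $\varepsilon_2$; for the mutation, write $\eta_{k+1}^N(f)=\big(\eta_{k+1}^N(f)-\tilde\eta_k^N(V_k f)\big)+\tilde\eta_k^N(V_k f)$, where the deterministic term tends a.s.\ to $\tilde\eta_k\big(V_k(f)\big)=\eta_{k+1}(f)$ by the convergence of $\tilde\eta_k^N$ just obtained, applied to the bounded measurable function $V_k(f)$. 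Consequently $\hat p_h=\eta_n^N(f_h)\prod_{s=0}^{n-1}\eta_s^N(G_s)$ converges a.s.\ to $\eta_n(f_h)\prod_{s=0}^{n-1}\eta_s(G_s)=p_h$, as a finite product of a.s.\ convergent sequences.

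The one genuinely technical step --- and the main obstacle --- is to prove that the two fluctuation terms $\tilde\eta_k^N(f)-\Psi_k(\eta_k^N)(f)$ and $\eta_{k+1}^N(f)-\tilde\eta_k^N(V_k f)$ go to $0$ almost surely. Since these are normalised sums of conditionally centred variables in a triangular array, conditional $L^2$ control is not enough to get almost sure convergence; instead I would prove conditional fourth-moment bounds of the type $\mathbb E\big[(\tilde\eta_k^N(f)-\Psi_k(\eta_k^N)(f))^4\mid\mathcal F_k\big]\leq C\,\|f\|_\infty^4/N^2$ and $\mathbb E\big[(\eta_{k+1}^N(f)-\tilde\eta_k^N(V_k f))^4\mid\mathcal G_k\big]\leq C\,\|f\|_\infty^4/N^2$ --- using the exchangeability of the resampled multiplicities in the first case and conditional independence in the second --- take expectations, and conclude by the Borel--Cantelli lemma. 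All the remaining steps are bookkeeping inductions driven by the two conditional identities of the algorithm. Alternatively, one may simply invoke the unbiasedness and $L^p$/almost-sure convergence theorems already available for Feynman--Kac particle models (e.g.\ in Del Moral's monograph), once it is checked that (G) places the present model within their hypotheses.
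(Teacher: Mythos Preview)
Your proposal is correct and is essentially an explicit unpacking of the argument the paper simply defers to: the paper's entire proof is the one-line citation ``follows from Theorems 7.4.2 and 7.4.3 in \cite{del2004feynman}'', i.e.\ Del Moral's Feynman--Kac monograph, which is exactly the reference you yourself invoke at the end. Your inductive martingale argument for unbiasedness of $\gamma_k^N$ and your fluctuation-plus-bias decomposition with fourth-moment/Borel--Cantelli control for almost-sure convergence are the standard mechanisms underlying those theorems, so there is no genuine difference in route---you have written out what the paper leaves to the citation.
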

 The proof of theorem \ref{consitant} follows from Theorems 7.4.2 and 7.4.3 in \cite{del2004feynman}. 
\begin{Theorem} 
When (G) is verified: \begin{equation}\sqrt N (\hat p_h- p_h)\overset{d}{\underset{N\to\infty}{\longrightarrow}} \mathcal N\big(0,\sigma^2_{IPS,G}\big) ,\end{equation}
where
\begin{align}
\sigma^2_{IPS,G}&=\sum_{k=0}^{ n-1}  \gamma_k(1)^2   \eta_{k} \Big(\big[ Q_{k,n}(f_h)-\eta_{k}Q_{k,n}(f_h)
\big]^2\Big) \\ &=\sum_{k=0}^{ n-1}  \left\{\mathbb E_{z_0}  \bigg[ \prod_{i=0}^{k-1} G_i (\mathbf Z_{\tau_i})\bigg]   \mathbb E_{z_0}  \bigg[  \mathbb E[h(\mathbf Z_{\tau_n})|\mathbf Z_{\tau_{k}} ]^2\, \prod_{s=0}^{k-1} G_s^{-1 }(\textbf Z_{\tau_s})\bigg]-   p_h^2\right\}.
\label{eq:var}
\end{align}
\end{Theorem}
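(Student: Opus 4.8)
The plan is to view the statement as the Feynman--Kac particle central limit theorem of \cite{del2004feynman} specialised to the present model, and to sketch the martingale argument behind it. I first reduce to a CLT for the unnormalized quantities: by \eqref{eq:hatp} and \eqref{eq:estigamma} one has $\hat p_h=\gamma_n^N(f_h)$, and by \eqref{eq:unnorm2} and \eqref{eq:ph} one has $p_h=\gamma_n(f_h)$, so it suffices to obtain a CLT for $\gamma_n^N(f_h)-\gamma_n(f_h)$. Assumption (G) enters twice. On the one hand it guarantees that the algorithm never stops, that $f_h$ is bounded, since $\|f_h\|_\infty\le\|h\|_\infty\,\varepsilon_2^{-n}$, and that every iterated potential $Q_{k,n}(f_h)$ is bounded, since $Q_j(g)=G_j\,V_j(g)$ gives $\|Q_{k,n}(f_h)\|_\infty\le\varepsilon_1^{\,n-k}\|f_h\|_\infty$. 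On the other hand it yields the $L^p$-consistency $\eta_k^N(f)\to\eta_k(f)$ at rate $N^{-1/2}$ for every bounded $f$ (established under (G) in \cite{del2004feynman}, cf.\ Theorem~\ref{consitant}), which will let me pass to the limit in the conditional variances below.

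The core is a martingale decomposition. Let $\mathcal G_k$ be the $\sigma$-field generated by the particle system up to and including the sample defining $\eta_k^N$. Because the selection step is a \emph{multinomial} resampling, conditionally on $\mathcal G_k$ the $N$ selected trajectories are i.i.d.\ with law $\Psi_k(\eta_k^N)$, and, after the independent mutations through $V_k$, the $N$ trajectories defining $\eta_{k+1}^N$ are i.i.d.\ with common law $\mu_k^N:=\Psi_k(\eta_k^N)V_k$. Since $\mu_k^N(g)=\eta_k^N\big(G_k\,V_k(g)\big)/\eta_k^N(G_k)=\eta_k^N\big(Q_k(g)\big)/\eta_k^N(G_k)$ while $\gamma_k^N(g)=\eta_k^N(g)\prod_{s=0}^{k-1}\eta_s^N(G_s)$, this yields the identity $\mathbb E\big[\gamma_{k+1}^N(g)\,\big|\,\mathcal G_k\big]=\gamma_k^N\big(Q_k(g)\big)$. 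Applying it with $g=Q_{k+1,n}(f_h)$ and using $Q_kQ_{k+1,n}=Q_{k,n}$ shows that $M_k:=\gamma_k^N\big(Q_{k,n}(f_h)\big)$, $k=0,\dots,n$ (with $Q_{n,n}=\mathrm{Id}$), is a $(\mathcal G_k)$-martingale, whose endpoints are $M_n=\gamma_n^N(f_h)=\hat p_h$ and, because $\eta_0=\delta_{z_0}$ is a Dirac mass, $M_0=\eta_0\big(Q_{0,n}(f_h)\big)=\gamma_n(f_h)=p_h$ deterministically (here one uses $\gamma_k\big(Q_{k,n}(\cdot)\big)=\gamma_n(\cdot)$). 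Hence $\hat p_h-p_h=\sum_{k=0}^{n-1}(M_{k+1}-M_k)$ is a sum of a fixed, finite number of martingale increments.

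It then remains to pass to the limit in the characteristic function. For each $k$, conditionally on $\mathcal G_k$, $\sqrt N\,(M_{k+1}-M_k)=\big(\prod_{s=0}^{k}\eta_s^N(G_s)\big)\,N^{-1/2}\sum_{j=1}^{N}\big[\,g_k(\mathbf Z_{\tau_{k+1}}^j)-\mu_k^N(g_k)\,\big]$, with $g_k:=Q_{k+1,n}(f_h)$, is $N^{-1/2}$ times a sum of $N$ centred i.i.d.\ bounded variables whose law and normalizing constant are $\mathcal G_k$-measurable; by a classical CLT uniform over these bounded parameters (e.g.\ through a Berry--Esseen bound), together with $\prod_{s=0}^{k}\eta_s^N(G_s)\to\gamma_{k+1}(1)$ and $\mu_k^N\to\tilde\eta_kV_k=\eta_{k+1}$ (both obtained from the $L^p$-consistency and from $\eta_k(G_k)\ge\varepsilon_2>0$), one gets $\mathbb E\big[e^{\,it\sqrt N(M_{k+1}-M_k)}\,\big|\,\mathcal G_k\big]\longrightarrow\exp\!\big(-\tfrac{t^2}{2}\,\gamma_{k+1}(1)^2\,\eta_{k+1}\big([Q_{k+1,n}(f_h)-\eta_{k+1}Q_{k+1,n}(f_h)]^2\big)\big)$ in probability. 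Conditioning successively on $\mathcal G_{n-1},\dots,\mathcal G_0$ and using the tower property together with dominated convergence (each conditional characteristic function is bounded by $1$), these estimates multiply, so $\mathbb E\big[e^{\,it\sqrt N(\hat p_h-p_h)}\big]$ converges to $\exp\!\big(-\tfrac{t^2}{2}\,\sigma^2_{IPS,G}\big)$, the limiting variance being the announced sum of the per-step conditional variances $\gamma_k(1)^2\,\eta_k\big([Q_{k,n}(f_h)-\eta_kQ_{k,n}(f_h)]^2\big)$; L\'evy's continuity theorem then gives the stated convergence in distribution.

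Finally the second expression for $\sigma^2_{IPS,G}$ follows by rewriting each term as $\gamma_k(1)^2\,\eta_k([Q_{k,n}(f_h)-\eta_kQ_{k,n}(f_h)]^2)=\gamma_k(1)\,\gamma_k\big(Q_{k,n}(f_h)^2\big)-\gamma_k\big(Q_{k,n}(f_h)\big)^2$, using $\gamma_k\big(Q_{k,n}(f_h)\big)=\gamma_n(f_h)=p_h$, the identity $\gamma_k(g)=\mathbb E_{z_0}\big[g(\mathbf Z_{\tau_k})\prod_{s=0}^{k-1}G_s(\mathbf Z_{\tau_s})\big]$ from \eqref{eq:unnorm2}, and the explicit form $Q_{k,n}(f_h)(\mathbf z_{\tau_k})=\mathbb E\big[h(\mathbf Z_{\tau_n})\,\big|\,\mathbf Z_{\tau_k}=\mathbf z_{\tau_k}\big]\big/\prod_{s=0}^{k-1}G_s(\mathbf z_{\tau_s})$, which comes from iterating $Q_j(g)=G_j\,V_j(g)$ and cancelling the potentials against those hidden in $f_h=h/\prod_{s=0}^{n-1}G_s$; the potentials that remain then telescope into the factor $\prod_{s=0}^{k-1}G_s^{-1}$ in the second expectation. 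The main obstacle throughout is the dependence between particles introduced by the resampling, which rules out a direct appeal to the i.i.d.\ CLT; the remedy used above is to condition on $\mathcal G_k$, making each generation conditionally independent, and then to control the limits of the (random) conditional variances, which is precisely where the $L^p$-consistency of $\eta_k^N$ and the two-sided bound (G) on the potentials are needed. Alternatively, one may invoke \cite[Ch.~9]{del2004feynman} directly, whose assumptions reduce in this setting to (G).
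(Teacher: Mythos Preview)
Your proposal is correct and follows exactly the route the paper takes: the paper does not give its own argument but simply cites \cite[Chapter~9, Theorem~9.3.1]{del2004feynman}, and your sketch is precisely the martingale/characteristic-function proof from that chapter, specialised to the present Feynman--Kac model with the two-sided bound (G). Your derivation of the second variance expression via $Q_{k,n}(f_h)(\mathbf z_{\tau_k})=\mathbb E[h(\mathbf Z_{\tau_n})\mid\mathbf Z_{\tau_k}=\mathbf z_{\tau_k}]/\prod_{s=0}^{k-1}G_s(\mathbf z_{\tau_s})$ is also the standard identification, so there is nothing to add beyond noting that your write-up is in fact more detailed than what the paper itself provides.
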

A proof  of this CLT can be found in \cite{del2004feynman} chapter 9 at the theorem 9.3.1 . 
For  the estimation of the variance $\sigma^2_{IPS,G}$ we refer the reader to \cite{lee2018variance}. 

\subsection{Classical improvements of the IPS method: the SMC method, and its alternative resampling}

We have seen that the resampling steps have the advantage of replicating high potential trajectories and discarding low potential trajectories. However the resampling steps also introduce some additional fluctuations  into the estimation (see \eqref{eq:var}). So we would like to trigger them only when it is judicious. Typically, not when the potentials of all the trajectories are similar, as in this case there is not point in discarding or replicating some trajectories over others. In order to avoid pointless resampling, one can trigger the selection step only when the weights are unbalanced.  This is done in the Sequential Monte Carlo (SMC) algorithm with adaptive resampling presented in Figure \ref{fig:algo2}.  In this algorithm, the heterogeneity of the weights is quantified using the effective sample size. At the $k^{th}$ step the effective sample size  is defined by:\begin{equation}
ESS_k=\dfrac{\left(\sum_{j=0}^{N}  W_k^j \,G_k(\mathbf Z_{\tau_{k}}^j)\right)^2}{\sum_{i=0}^{N} \left(W_k^i\, G_k(\mathbf Z_{\tau_{k}}^i)\right)^2}.
\end{equation}Its value is  between $1$ and $N$ and it measures the homogeneity in the candidate weights $\frac{W_{k}^i G_k( \mathbf Z_{\tau_{k}}^i)}{\sum_{j }  W_{k}^j G_k( \mathbf Z_{\tau_{k}}^j)}$: when $ESS_k = N$  the weights are perfectly balanced and are all equal to $\frac 1 {N}$, and conversely when $ESS_k = 1$ all the weights are null except one, which concentrates the totality of the mass.  Therefore, one considers the weights are too unbalanced when $ESS_k \leq eN$ where $e\in[0,1]$ is a tuning parameter.
\begin{figure}[h]\fbox{
    \begin{algorithm}[H]
   \textbf{Initialization :} $ k=0, \ \forall i :=1..N, \mathbf Z_0^i=(z_0)$ and  $ W_0^i=\frac 1 N $, and $\mytilde W_{0}^i=\frac{  G_0( \mathbf Z_{0}^i)}{\sum_{j }    G_0( \mathbf Z_{0}^j)}$  \\
 \While{ $k<n$}{
   \textbf{Selection:}\\
  \eIf{$ESS_k \leq eN$}{ Sample
   $(\tilde N_k^j)_{j=1..N}\sim Mult\big( N, (\mytilde  W_{k}^j)_{j=1..N}\big)$ and set $\forall i=1.. n ,\ \mytilde W_{k}^i:= \frac 1 {N}$  
   }{
 \For{$i:=1.. N$}{set $\mytilde{\mathbf Z}_{\tau_{k}}^i:=  \mathbf Z_{\tau_{k}}^i$ }
  }
   \textbf{Propagation :} \\
 \For{$i:=1.. N$}{   Sample $\mathbf{Z}_{\tau_{k+1}}^j$ from $ \mathbf V_{k+1}(. | \mytilde{\mathbf{Z}}_{\tau_{k}}^j) $\\
						set $W_{k+1}^i=\mytilde W_k^i$ }
\For{$i:=1.. N$}{Set $\mytilde W_{k+1}^i=\frac{W_{k+1}^i G_{k+1}( \mathbf Z_{\tau_{k+1}}^i)}{\sum_{j }  W_{k+1}^j G_{k+1}( \mathbf Z_{\tau_{k+1}}^j)}$  }
    \eIf{$\forall j,\ \mytilde W_{k+1}^j=0$}{$\forall q>k, $ set $ \eta_q^N=\tilde \eta_q^N=0$ and Stop }{$k:=k+1$}
 } 
\end{algorithm}}
\caption{SMC algorithm with adaptive resampling steps}
\label{fig:algo2}
\end{figure}  
\\

Note that in the presented algorithm one can use alternative strategies to select high potential trajectories. Here, the presented algorithms include  a standard multinomial resampling procedure, but one can also use residual resampling  or stratified resampling without altering the properties of the estimator. Empirical results suggest that these alternative resampling schemes  yield estimations with smaller variances~\cite{douc2005comparison,hol2006resampling}.  There are also recent theoretical results on the performance of different resampling schemes \cite{gerber2017negative}.  
 MCMC steps with invariant distribution $\tilde \eta_k$ can also be included  in the algorithm after the resampling step. Some adaptations of the algorithm for parallel implementations have also been studied in \cite{verge2015modele} for instance.\\ 

\section{Choice of the potential functions}
\label{sec:chooseG}
 
Note that the variance of $\hat p_h$ depend on the number of subdivisions and on the choice of the potential functions. We display here an important hint on how to select potential functions that yield a small variance.  Indeed, the theoretical expressions of the potential functions that minimize the asymptotic variance of the IPS estimator are known \cite{chraibi2018optimal}.
\begin{Theorem}
For $k\geq 1$, let $G^{*}_k $ be defined by:
\begin{align}
 G^{*}_k(\mathbf z_{\tau_k}) =
  \sqrt{\frac{\mathbb E\Big[ \  \mathbb E\big[ h(\mathbf Z_{\tau_n})\big|\mathbf Z_{\tau_{k+1}}\big]^2\big|\mathbf Z_{\tau_k} = \mathbf z_{\tau_k}\Big]\ \ \ }
{\mathbb E\Big[ \ \mathbb E\big[ \, h(\mathbf Z_{\tau_n}) \big|\mathbf Z_{\tau_k} \big]^2  \big|\mathbf Z_{\tau_{k-1}} = \mathbf z_{\tau_{k-1}}\Big]}}
 \label{eq:optiG1}
\end{align} 
if $\mathbb E\Big[  \mathbb E\big[   h(\mathbf Z_{\tau_n}) \big|\mathbf Z_{\tau_k} \big]^2  \big|\mathbf Z_{\tau_{k-1}}  = \mathbf z_{\tau_{k-1}}\Big]  \neq 0 $,
and $ G^{*}_k(\mathbf z_{\tau_k}) = 0 $ otherwise.
For $k=0$, we define
\begin{equation}
 \ G^{*}_0(\mathbf z_{\tau_0} )=  \sqrt{\, \mathbb E\Big[\, \mathbb E\big[ h(\mathbf Z_{\tau_n})\big|\mathbf Z_{\tau_1}\big]^2\big|\mathbf Z_{\tau_0} = \mathbf z_{\tau_0}\Big]}. \label{eq:optiG2}
\end{equation}  
The potential functions minimizing $\sigma^2_{IPS,G}$ are the ones that are proportional to the $ G^*_k $'s $\forall k\leq n$. The optimal variance of the IPS method with $n$ steps is then 
\begin{align}
&\sigma^2_{IPS,G^*} = 
 \mathbb E \left[  \,  \mathbb E\big[ h(\mathbf Z_{\tau_n})\big|\mathbf Z_{\tau_0}\big]^2 \right]-  p_h^2\nonumber\\
&+\sum_{k=1}^{ n } \left\{\mathbb E \left[\sqrt{\mathbb E\Big[ \  \mathbb E\big[ h(\mathbf Z_{\tau_n})\big|\mathbf Z_{\tau_k}\big]^2\big|\mathbf Z_{\tau_{k-1}}\Big]}\right]^2-  p_h^2\right\}.
\label{eq:varOpti}
\end{align}   
\end{Theorem}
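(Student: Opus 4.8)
The plan is to work directly from the closed form \eqref{eq:var} of the asymptotic variance. Writing $T_k(G):=\mathbb E_{z_0}\big[\prod_{i=0}^{k-1}G_i(\mathbf Z_{\tau_i})\big]\,\mathbb E_{z_0}\big[\mathbb E[h(\mathbf Z_{\tau_n})\,|\,\mathbf Z_{\tau_k}]^2\,\prod_{s=0}^{k-1}G_s^{-1}(\mathbf Z_{\tau_s})\big]$ for the $k$-th summand of \eqref{eq:var}, we have $\sigma^2_{IPS,G}=\sum_k\big(T_k(G)-p_h^2\big)$; since $p_h$ does not depend on the potentials, minimising $\sigma^2_{IPS,G}$ over the positive potentials amounts to minimising $\sum_k T_k(G)$, with \eqref{eq:var} read as the definition of the variance for every positive $G$ (including those outside the window (G) needed for the CLT). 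The whole argument then rests on a reparametrisation that turns this sum into uncoupled scalar problems.

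First I would reparametrise: for positive potentials $(G_s)_s$ put $P_k:=\prod_{s=0}^{k-1}G_s(\mathbf Z_{\tau_s})$, with $P_0:=1$; this is a positive $\sigma(\mathbf Z_{\tau_{k-1}})$-measurable random variable. Conversely, any family $(P_k)_{k\ge1}$ of positive $\sigma(\mathbf Z_{\tau_{k-1}})$-measurable random variables is realised by the potentials $G_0:=P_1$ and $G_k:=P_{k+1}/P_k$, and such a $G_k$ is automatically a function of $\mathbf Z_{\tau_k}$ alone because $P_{k+1}$ is $\sigma(\mathbf Z_{\tau_k})$-measurable while $P_k$ is $\sigma(\mathbf Z_{\tau_{k-1}})$-measurable. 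I would check (a short telescoping computation using positivity) that these two maps are mutually inverse, so that optimising over the potentials is exactly optimising over the families $(P_k)$ --- and the decisive point is that the $P_k$ carry no joint constraint, so they may be chosen freely and separately.

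Next I would exploit this decoupling. Since $\prod_{s=0}^{k-1}G_s^{-1}(\mathbf Z_{\tau_s})=P_k^{-1}$ is $\sigma(\mathbf Z_{\tau_{k-1}})$-measurable, the tower property gives $T_k=\mathbb E_{z_0}[P_k]\,\mathbb E_{z_0}[P_k^{-1}\beta_{k-1}]$ with $\beta_{k-1}:=\mathbb E\big[\mathbb E[h(\mathbf Z_{\tau_n})\,|\,\mathbf Z_{\tau_k}]^2\,\big|\,\mathbf Z_{\tau_{k-1}}\big]$, so $T_k$ depends on $G$ only through $P_k$. Cauchy--Schwarz applied to $\sqrt{P_k}$ and $\sqrt{\beta_{k-1}/P_k}$ yields $T_k\ge\mathbb E_{z_0}\big[\sqrt{\beta_{k-1}}\big]^2$, with equality if and only if $P_k\propto\sqrt{\beta_{k-1}}$ almost surely. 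As $\sqrt{\beta_{k-1}}$ is $\sigma(\mathbf Z_{\tau_{k-1}})$-measurable and the $P_k$ are uncoupled, the choice $P_k=\sqrt{\beta_{k-1}}$ is admissible for every $k$ simultaneously and attains the bound in each term, while any minimiser must satisfy $P_k\propto\sqrt{\beta_{k-1}}$ for all $k$. Translating back, $P_k=\sqrt{\beta_{k-1}}$ gives $G_0=\sqrt{\beta_0}=G^{*}_0$ and $G_k=\sqrt{\beta_k/\beta_{k-1}}=G^{*}_k$ (using $\beta_{k-1}=\mathbb E[\mathbb E[h(\mathbf Z_{\tau_n})|\mathbf Z_{\tau_k}]^2|\mathbf Z_{\tau_{k-1}}]$), so that the minimisers are exactly the potentials with $G_k\propto G^{*}_k$. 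Finally, plugging $T_k=\mathbb E_{z_0}[\sqrt{\beta_{k-1}}]^2$ back into \eqref{eq:var}, together with the $G$-independent $k=0$ term $\mathbb E[\mathbb E[h(\mathbf Z_{\tau_n})|\mathbf Z_{\tau_0}]^2]-p_h^2$ (here $P_0=1$), produces exactly \eqref{eq:varOpti}.

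The main obstacle is the reparametrisation step of the second paragraph: making the bijection between positive potentials and partial-product families fully rigorous, confirming that no coupling survives so that $\sum_k T_k$ genuinely splits term by term, and treating the degenerate set $\{\beta_{k-1}=0\}$, on which the optimal potential must vanish and $G^{*}_k$ falls outside the admissibility window (G) of the CLT --- so that the ``minimum'' is to be understood as an infimum over positive potentials, approached in the limit by admissible ones. Once that is settled, what remains is the routine scalar Cauchy--Schwarz estimate above.
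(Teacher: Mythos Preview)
The paper does not actually prove this theorem; it merely states the result and cites the companion paper \cite{chraibi2018optimal} for the derivation. Your argument is correct and self-contained: the reparametrisation $P_k=\prod_{s<k}G_s(\mathbf Z_{\tau_s})$ is indeed a bijection between positive potentials and families of positive $\sigma(\mathbf Z_{\tau_{k-1}})$-measurable variables, it genuinely decouples the summands of \eqref{eq:var} since each $T_k$ depends on $G$ only through $P_k$, and the Cauchy--Schwarz bound $\mathbb E[P_k]\,\mathbb E[P_k^{-1}\beta_{k-1}]\ge\big(\mathbb E[\sqrt{\beta_{k-1}}]\big)^2$ with equality iff $P_k\propto\sqrt{\beta_{k-1}}$ then delivers both the minimisers and the value \eqref{eq:varOpti}. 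The degeneracy you flag on $\{\beta_{k-1}=0\}$ is real---the statement handles it by setting $G^{*}_k=0$ there, which lies outside the hypothesis \ref{eq:(G)} required for the CLT---and your resolution (infimum over admissible potentials, attained in the limit) is the standard one.
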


For the reliability assessment case, the optimal target distributions have the form:
\begin{equation}
  \tilde \eta^{*}_k(d\mathbf z_{\tau_k}) \propto \sqrt{ \mathbb E\big[\mathbb E[ \indic{\mathscr D}(\mathbf Z_{\tau_n})|\mathbf Z_{\tau_{k+1}}]^2\big|\mathbf Z_{\tau_k}=\mathbf z_{\tau_k}\big]}\, \prod_{s=1}^k  V_s(d\mathbf z_{\tau_s}|\mathbf z_{\tau_{s\,\text{-}1}} ) ,
\end{equation} and the  potential functions would be defined by \eqref{eq:optiG1} and \eqref{eq:optiG2}, taking $h=\indic{\mathscr D}$. As we do not have  closed-form expressions of the functions $\mathbf z_{\tau_k}\to \sqrt{\mathbb E\big[\mathbb E[ \indic{\mathscr D}(\mathbf Z_{\tau_n})|\mathbf Z_{\tau_{s+1}}]^2|\big|\mathbf Z_{\tau_s}=\mathbf z_{\tau_k}\big]}$, we propose  to use instead a parametric approximation of these expectations based on our knowledge of the system and denoted by $U_\alpha(\mathbf Z_{\tau_s})$ so that we take \begin{align}
\tilde \eta_k(d\mathbf Z_{\tau_k}) &\propto U_\alpha(\mathbf Z_{\tau_k})\, \prod_{s=1}^k  V_s(d\mathbf Z_{\tau_s}|\mathbf Z_{\tau_{s\,\text{-}1}} )\\
\mbox{and } 
\forall s>0, \qquad G_s(\mathbf Z_{\tau_s})&=\frac{U_\alpha(\mathbf Z_{\tau_s})}{U_\alpha(\mathbf Z_{\tau_{s\,\text{-}1}})}\qquad\mbox{and}\qquad G_0(\mathbf Z_{\tau_0})=U_\alpha(\mathbf Z_{\tau_0}).
\end{align}
For a system including similar  components in parallel redundancy, we propose to set \linebreak $U_\alpha(\mathbf Z_{\tau_s})=1$ when $\mathbf Z_{\tau_s}$ has already reached the failure region once, and  to set $U_\alpha(\mathbf Z_{\tau_s})= \exp\big[-\alpha\, (b(Z_{\tau_s})+1)^2\big]\,L(\tau_s) $ otherwise, where $L$ is a positive function, and $b(Z)$ indicates the number of working components within a state $Z$. Here $\alpha$ is a parameter tuning the strength of the selection.

\section{The IPS+M method for concentrated PDMPs}\label{sec:IPS+}
\subsection{The issue with PDMP modeling a reliable system}\label{subsec:issue}
When it is used  on a reliable system and therefore on a concentrated PDMP (see Section \ref{subseq:ConcentratedPDMP}), the IPS method tends to loose in efficiency. This  efficiency loss can be attributed to the exploration steps. Remember that an  exploration step comes after a selection step: it builds a sample $ (\mathbf Z_{\tau_{k+1}}^j,  W_{k+1}^j )_{ j\leq  N}$ by extending the trajectories of a selected sample $(\mytilde{\mathbf Z}_{\tau_k}^j,\mytilde W_k^j )_{ j\leq \tilde N}$. This newly built sample $ (\mathbf Z_{\tau_{k+1}}^j,  W_{k+1}^j )_{ j\leq  N}$  fulfills two goals : 1) It contributes to the empirical approximation $\eta_{k+1}^{N}$  of $\eta_{k+1}$.   2)  It is used as a candidate sample for the next selection. But this second goal is often poorly achieved with a concentrated PDMP. Indeed, in order to get a good approximation $\tilde \eta_{k+1}^{N}$ of $\tilde \eta_{k+1}$, it is preferable that the candidate  sample to selection $(\mathbf Z_{\tau_{k+1}}^j,  W_{k+1}^j )_{ j\leq  N}$   contains  as many different trajectories as possible, along with high potential trajectories. Unfortunately, with this kind of PDMP, it is generally not the case: the candidate sample often  contains several replicates of the same trajectories, and no high potential trajectory.  Therefore each distribution $\tilde \eta_k $ is poorly represented, and so is each target distribution $\eta_{k+1}$, which eventually deteriorates the quality of the estimator $\hat p_{\mathscr D}$ or $\hat p_{h}$.\\

To understand why  the exploration steps   are not likely to generate many different trajectories with a concentrated PDMP, we have to  come back at the beginning of the  propagation step. At that point,  the  sample  $(\mytilde{\mathbf Z}_{\tau_k}^j,\mytilde W_k^j )_{1\leq j\leq N}$ is naturally clustered because of the previous selection step, each of the clusters containing  several replicates of the same trajectories. We can rewrite \eqref{eq:estiEta} in the following way \begin{align}
 \tilde{\eta}_k^N    &=  \sum_{i=1}^{N}    \,\mytilde W_k^{i}\, \delta_{  \mytilde{\mathbf Z}_{\tau_k}^{i}} = \frac{1}{N}\sum_{j=1}^{N}  \tilde N_k^j\delta_{  \mathbf Z_{\tau_k}^{j}} 
 \label{eq:etaClusterized}
\end{align}
where $\sum_{j=1}^{N} \tilde N_k^j=N$.  In practice many of the $\tilde N_k^j$ are null and only a few are positive and the $N$ resampled trajectories $(\mytilde{\mathbf Z}_{\tau_k}^j)_{j\leq N}$  are concentrated on a few trajectories.
Then, each of the $\tilde N_k^j$ trajectories of the $j$-th cluster is extended by using the same distribution $V_k(.| \mathbf Z_{\tau_k}^{j})$.  (For all index $i$  such that $A_{k}^{i}= j$ the trajectory $ \mytilde{\mathbf Z}_{\tau_k}^{i}$ is extended with the   kernel $V_k(.| \mathbf Z_{\tau_k}^{j})$). 
As the kernel $V_k(.| \mathbf Z_{\tau_k}^{j})$ corresponds to a concentrated PDMP, it is likely to extend all the trajectories of a cluster in the same manner. The trajectory $\mathbf a_{\tau_{k+1}}^{k,j}$ which extends  $\mathbf Z_{\tau_k}^{j}$ until   $ \tau_{k+1}$  without spontaneous jump  or   failure  concentrates the mass of the kernel $V_k(.| \mathbf Z_{\tau_k}^{j})$. Indeed, at this point we have : \begin{equation}V_k(\mathbf a_{\tau_{k+1}}^{k,j}| \mathbf Z_{\tau_k}^{j})=\mathbb P\big(
\mathbf Z_{\tau_{k+1}} =\mathbf a_{\tau_{k+1}}^{k,j} \big|
{\mathbf Z}_{\tau_{k}} =\mathbf Z_{\tau_k}^{j}\big)
\simeq 1.\end{equation}
Therefore each of the trajectories $\mytilde{\mathbf Z}_{\tau_{k}}^{i}= \mathbf Z _{\tau_{k}}^{j}$ in a cluster   tends to be extended in $\mathbf a_{\tau_{k+1}}^{k,j}$. Thus, the trajectories within a cluster are likely to stay clumped together during the propagation, and the propagated sample $ (\mathbf Z_{\tau_{k+1}}^j,  W_{k+1}^j )_{ j\leq  N}$ is very likely to be  clustered too. When the preponderant trajectories $\mathbf a_{\tau_{k+1}}^{k,j}$ have   low potential values, the sample is not likely to contain high potential trajectories. Consequently the selection step having no good candidates and too few candidates, it  tends to yield an inaccurate estimation  of the distributions $\tilde \eta_k$.\\
This situation is typical of reliability assessment. In that context, a  well constructed potential function is close to $G^*_k$ wherein $h=\indic{\mathscr D}$. So the potential of a trajectory $G_{k+1}(\mathbf Z_{\tau_{k+1}})$ should be high if its final state $Z_{\tau_{k+1}}$ is more degraded   than the state $Z_{\tau_k}$. This generally implies that  $\mathbf Z_{\tau_{k+1}} $ includes at least one component failure   between $\tau_k$ and $\tau_{k+1}$.  As the preponderant trajectories $ \mathbf a_{\tau_{k +1}}^{k,j}$  do not contain failure between $\tau_k$ and $\tau_{k+1}$ they generally are associated with low potential values.



The segment of $\mathbf a_{\tau_{k+1}}^{k,j}$ on  $(\tau_k,\tau_{k+1}]$ relates to a Dirac contribution of the measure $\zeta_{Z^j_{\tau_k},\tau_{k+1}-\tau_k}$. We can, therefore, decompose the expected propagation of the trajectory $\mathbf Z_{\tau_k}^{j}$  in this way:  \begin{align}
\delta_{ \mathbf Z_{\tau_k}^{j}}V_k(f) 
&=f\big(\mathbf a_{\tau_{k+1}}^{k,j}\big)V_k\big(\mathbf a_{\tau_{k+1}}^{k,j}|\mathbf Z_{\tau_{k}}^{j}\big)\ +\ 
\int_{\mathbf E_{\tau_{k+1}}\backslash\{\mathbf a_{\tau_{k+1}}^{k,j}\}} f\big( \mathbf z_{\tau_{k+1}} \big) V_k\big(d\mathbf z_{\tau_{k+1}}|\mathbf Z_{\tau_k}^{j}\big)  ,
\label{eq:clusterProp}
\end{align}
where $ f   \in\mathscr M (\mathbf E_{\tau_{k+1}})$. And the expected propagation of $\eta_k^N$ would be:\begin{align}
     \tilde \eta_{k}^{\tilde N}V_{k}(f)&=  \sum_{j=1}^{N} \frac{\tilde N_k^j}{ N} \ \delta_{\mathbf Z_k^j}V_{k}(f)\nonumber\\
     &= \sum_{j=1}^{N} \frac{\tilde N_k^j}{ N} f\big(\mathbf a_{\tau_{k+1}}^{k,j}\big)V_k\big(\mathbf a_{\tau_{k+1}}^{k,j}|\mathbf Z_{\tau_{k}}^{j}\big)\ +\ 
\frac{\tilde N_k^j}{ N}\int_{\mathbf E_{\tau_{k+1}}\backslash\{\mathbf a_{\tau_{k+1}}^{k,j}\}} f\big( \mathbf z_{\tau_{k+1}} \big) V_k\big(d\mathbf z_{\tau_{k+1}}|\mathbf Z_{\tau_k}^{j}\big)\nonumber\\
&\simeq \sum_{j=1}^{N} \frac{\tilde N_k^j}{ N} f\big(\mathbf a_{\tau_{k+1}}^{k,j}\big). \label{eq:clusteredPropIPS}
    \end{align}

 \subsection{Modify the propagation of clusters}
 
 In order to diversify the simulated trajectories, and to increase the precision of the estimation, we propose to modify the way we extend the selected trajectories.    Here we consider that  the size of the propagated sample  can differ from the size of the previous selected sample. We now denote $\tilde N_{k}$ the size of the $k^{th}$ selected sample, and $N_{k+1}$  the size of the $k^{th}$ propagated sample, with the convention $N_0=N$. We stressed out, in section \ref{subsec:issue}, that the   propagation step aims at providing an estimation of $\eta_{k+1}=\tilde \eta_{k}V_{k}$ using the selected sample $(\mytilde{\mathbf Z}_k^j ,\mytilde W_k^{j})_{j\leq \tilde N_k}$. In other words, the selection step aims at providing a  propagated weighted sample $(\mathbf Z_{\tau_{k+1}}^j,W_{k+1}^{j})_{j\leq N_{k+1}}$ to estimate the distribution $\tilde \eta_{k}^{\tilde N_k}V_{k}$ defined by: \begin{align}
     \tilde \eta_{k}^{\tilde N_k}V_{k}(f)&=\sum_{j=1}^{\tilde N_k} \mytilde W_k^{j}\ \delta_{ \mytilde{\mathbf Z}_k^j}V_{k}(f) =\sum_{j=1}^{N_k} \frac{\tilde N_k^j}{\tilde N_k} \ \delta_{\mathbf Z_k^j}V_{k}(f)\label{eq:clusteredProp} ,
    \end{align} where   $f\in\mathscr M(\mathbf E_{\tau_{k+1}})$.
   We denote by $\bar V_k$ the Markovian kernel from $\mathbf E_{\tau_{k}}$ to $\mathbf E_{\tau_{k+1}}$ such that, for any trajectory $ \mathbf Z_{\tau_{k}}^{j }$, $\bar V_k(.| \mathbf Z _{\tau_{k}}^{j } )$ is the conditioning of $V_k(.| \mathbf Z _{\tau_{k}}^{j } )$ to $\mathbf E_k\backslash \{\mathbf a_{\tau_{k+1}}^{k,j}\}$. $\bar V_k$'s density verifies:
 \begin{align}
   \bar V_k(\mathbf z_{\tau_{k+1}}| \mathbf Z _{\tau_{k}}^{j } )
   & =\frac{V_k(\mathbf z_{\tau_{k+1}}|\mathbf Z_{\tau_{k}}^{j} )}{1-V_k\big(\mathbf a_{\tau_{k+1}}^{k,j} \big| \mathbf Z _{\tau_{k}}^{j }   \big)} \indic{ \mathbf z_{\tau_{k+1}} \neq \mathbf a_{\tau_{k+1}}^{k,j}}.
 \end{align}
Using \eqref{eq:clusteredProp}  we can decompose $\tilde \eta_{k}^{\tilde N_k}V_{k}$ as follows:
 \begin{align}
     \tilde \eta_{k}^{\tilde N_k}V_{k}(f)
   &=\sum_{j=1}^{N_k} \frac{\tilde N_k^j}{\tilde N_k} \Bigg[V_k\big(\mathbf a_{\tau_{k+1}}^{k,j}|\mathbf Z_{\tau_{k}}^{j}\big)f\big(\mathbf a_{\tau_{k+1}}^{k,j}\big)  + \left(1-V_k\big(\mathbf a_{\tau_{k+1}}^{k,j}|\mathbf Z_{\tau_{k}}^{j}\big)\right) \delta_{\mathbf Z_k^j}\bar V_{k}(f)\Bigg].  \label{eq:probaA}
 \end{align}
 In the original IPS algorithm, the sample approximating $\tilde \eta_{k}^{\tilde N_k}V_{k}$ is built by directly extending  each trajectory in the selected sample.  When we extend the replicates of a cluster, in average a proportion $V_k(\mathbf a_{\tau_{k+1}}^{k,j}| \mathbf Z_{\tau_k}^{j})$  of the replicates are extended in $\mathbf a_{\tau_{k+1}}^{k,j}$. This proportion of trajectories extended in $\mathbf a_{\tau_{k+1}}^{k,j}$ then serves as an estimation of $V_k(\mathbf a_{\tau_{k+1}}^{k,j}| \mathbf Z_{\tau_k}^{j})$.   But it is not necessary to  misspend  all these replicates to estimate the probability of the preponderant trajectory. If we use   equation \eqref{eq:probaA},  we would need to generate the  trajectory $\mathbf a_{\tau_{k+1}}^{k,j}$ only once   to assess its contribution to the propagation of the  cluster.  Also, $\mathbf a_{\tau_{k+1}}^{k,j}$ is easy to get. To generate it, it generally suffices  to run the simulation process starting from the state $Z_{\tau_k}^j$ until time $\tau_{k+1}$, while setting   the jumps rates and the probability of failure on demand to zero.\\
 
 Therefore,  for each cluster, we propose to use an additional replicate to generate $\mathbf a_{\tau_{k+1}}^{k,j}$ and compute exactly its contribution. So for any $j\in\{1,\dots N_k\}$, we will extend the selected trajectory  $\mathbf Z_k^j$, $N_k^j$ times, where $N_k^j=\tilde N_k^j +\indic{\tilde N_k^j>0}$. We denote $j_i$ the index of the $i^{th}$ replicates of $\mathbf Z_k^j$, and consider the added replicate has  index $0$ such that for $i\in \{ 0, \dots , \tilde N^j_k\}$ we have $\mytilde{\mathbf Z}^{j_i}_{\tau_k}=\mathbf Z_k^j$. The additional replicate is deterministically extended to the preponderant trajectory, so we have $ \mathbf Z_{\tau_{k+1 }}^{j_{0}}=\mathbf a_{\tau_{k+1}}^{k,j}$, and we set its weight to $W_{k+1}^{j_0}=V_k(\mathbf a_{\tau_{k+1}}^{k,j}|\mathbf Z_{\tau_{k}}^{j}) \frac{\tilde N_k^j}{\tilde N_k}$, so that it carries all the mass associated to the preponderant trajectories of a cluster.
 Then we can use all the remaining $\tilde N_k^j$ trajectories in the cluster  to estimate the non preponderant part of the cluster's propagation (the $1^{st}$ term  in the right hand side of equation \eqref{eq:clusterProp}). For $i>0$, we  condition the extensions to avoid $\mathbf a_{\tau_{k+1}}^{k,j}$ generating the $\mathbf{Z\, }_{\tau_{k+1}}^{j_i}$ according to the kernel $\bar V_k(.| \mathbf{Z\,}_{\tau_{k}}^{j})$ and set  $W_{k+1}^{j_i}=\frac{1-V_k(\mathbf a_{\tau_{k+1}}^{k,j}|\mathbf Z_{\tau_{k}}^{j}) }{\tilde N_k^j}\frac{\tilde N_k^j}{\tilde N_k}$. \\
 Usually, the simulations of a restricted law  are carried out using a rejection algorithm, but in our case a rejection algorithm would perform poorly. The rate of rejection would be too high, as it would be  equal to $V_k(\mathbf a_{\tau_{k+1}}^{k,j}|\mathbf Z_{\tau_{k}}^{j}) $ which is typically close to 1. For PDMPs,  such  simulations, conditioned   to avoid a preponderant trajectory, can be efficiently carried out using the memorization method.  This method, introduced in \cite{labeau1996probabilistic}, shares similarities with the inverse method. It therefore benefits from not using any rejection, and so it is  well suited to our applications.  The memorization method  is presented in section \ref{sec:Memo}. 
 
The target distributions  $\tilde \eta_k $ are  still estimated with $ \tilde \eta_k^{\tilde N_k}$, using equation  \eqref{eq:etaClusterized}, but for $k=0 $ to $n-1$, $\eta_{k+1}$, the propagation of a target distribution, is now estimated by :
 \begin{align}
     \eta_{k+1}^{N_{k+1}}
     &=\sum_{i=1}^{N_{k+1}}W_{k+1}^{i}\delta_{\mathbf Z_{k+1}^i} = \sum_{\,j=1,\tilde N_k^j>0}^{N_{k }\,}  \sum_{i=0}^{\tilde N_k^j}W_{k+1}^{j_i}\delta_{\mathbf Z_{k+1}^{j_i}} \nonumber\\
     &= \sum_{\,j=1,\tilde N_k^j>0\,}^{N_{k }}\frac{\tilde N_k^j}{\tilde N_k} \Bigg[V_k\big(\mathbf a_{\tau_{k+1}}^{k,j}|\mathbf Z_{\tau_{k}}^{j}\big)\delta_{\mathbf a_{\tau_{k+1}}^{k,j}}  + \frac{\left(1-V_k\big(\mathbf a_{\tau_{k+1}}^{k,j}|\mathbf Z_{\tau_{k}}^{j}\big)\right) }{\tilde N_k^j}\sum_{i=1}^{\tilde N_k^j} \delta_{\mathbf Z_{k+1}^{j_i}} \Bigg]
 \end{align}

Let $\tilde{\mathbf N}_k=(N_0,\tilde N_0,N_1,\tilde N_1,\dots,\tilde N_k) $ and $ \mathbf N_k=(N_0,\tilde N_0,N_1,\tilde N_1,\dots,N_k) $
We now note $\tilde{\gamma}_k^{\tilde{\mathbf N}_k}$ and   $\gamma_k^{ \mathbf N_k}$ the  estimations of the unnormalized distributions, and  for all $k\leq n$ and $f\in \mathscr M( \mathbf E_{\tau_k})$, we  define them by:
\begin{equation}
    \tilde{\gamma}_k^{ \tilde{\mathbf N}_k }  (f) = \tilde{\eta}_k^{\tilde N_k }(f)  \prod_{s=0}^{k-1} \eta_s^{N_s }(G_s)
    \quad \mbox{ and } \quad 
    \gamma_k^{  \mathbf N_k } (f) =\eta_k^{ N_k }(f)                   \prod_{s=0}^{k-1} \eta_s^{N_s}(G_s).
    \label{eq:estigamma2}
\end{equation}
In the end, $p_h$ is  estimated using the equation : \begin{equation}
\hat p_h = \eta_{n}^{N_n}(f_h)  \overset{n-1}{\underset{{k=0}}{\prod}}    \eta_k^{N_k}\big(G_k \big ) .
\end{equation}
The full modified version of the algorithm   is presented in  Figure \ref{fig:algo3}. We call this modified version of the IPS algorithm the IPS+M algorithm.\\

Throughout the rest of the paper, the notation $\mathbb E_{_M}$ will indicate that the expectation is associated to the IPS+M method and $\mathbb E$ will still denote the expectation for the original IPS method.
   \begin{figure}[ht]
\fbox{\begin{algorithm}[H]
   \textbf{Initialization :} $ k=0, \ \forall j =1..N, \mathbf Z_0^j=(z_0)$ and  $ W_0^j=\frac 1 {N} $, and $\mytilde W_{0}^j=\frac{  G_0( \mathbf Z_{0}^j)}{\sum_{s }    G_0( \mathbf Z_{0}^s)}$  \\
 \While{ $k<n$}{\textbf{Selection:}\\
    $\tilde N_k=N$, and sample 
   $ ( \tilde N_k^j)_{j=1..N_k}\sim Mult\big( N, (  \mytilde  W_{k}^j)_{j=1..N_k}\big)$\\
   $\forall i=1..\tilde N_k ,\ \mytilde W_{k}^i:= \frac 1 {N}$ \\
    $\forall i=1..\tilde N_k,\ $ set $ N_{k}^i=  \tilde N_k^i+\indic{\tilde N_k^i>0}$ \\ 
   set $   N_{k} :=\sum_{i=1}^{N_k}    N_{k}^i$       
   \textbf{Propagation :} \\
 \For{$j:=1..  N_k$
  }{\If{$N_k^j>0$}{set $ \mathbf{Z\, }_{\tau_{k+1}}^{j_0}=\mathbf a_{\tau_{k+1}}^{k,j}$  
    and $W_{k+1}^{j_0}=  V_k\big(\mathbf a_{\tau_{k+1}}^{k,j}|\mathbf Z_{\tau_{k}}^{j}\big)\sum_{i=1}^{\tilde N_k^j}\mytilde{W}_{k }^{j_i}$\\
        \For{$j=1.. \tilde N_k^j$
        }{Sample $ \mathbf{Z\, }_{\tau_{k+1}}^{j_i}$ from $ \bar V_k(.| \mathbf{Z\,}_{\tau_{k}}^{j})$ 
        and set $ W_{k+1}^{j_i}=\frac{(1-V_k\big(\mathbf a_{\tau_{k+1}}^{k,j}|\mathbf Z_{\tau_{k}}^{j}\big))}{ \tilde N_k^j }\sum_{i=1}^{\tilde N_k^j}\mytilde{W}_{k }^{j_i}$
        }}
   }
  \For{$i:=1..N_{k}$
  }{ $\mytilde W_{k+1}^i=\frac{W_{k+1}^i G_{k+1}( \mathbf Z_{\tau_{k+1}}^i)}{\sum_{j }   W_{k+1}^j G_{k+1}( \mathbf Z_{\tau_{k+1}}^j)}$}
   	 \eIf{$\forall j,\ \mytilde W_{k+1}^j=0$}{$\forall q>k, $ set $ \eta_q^{N_q}=\tilde \eta_q^{\tilde N_q}=0$ and Stop }{$k:=k+1$}}
\end{algorithm}}
\caption{IPS+M algorithm}
\label{fig:algo3}
\end{figure}

\subsection{Convergence properties of the IPS+M estimators}
In this section we show that  the estimator  $\hat p_h$ obtained with the IPS+M method have the same basic properties as the IPS estimator. With the IPS+M method,  $\hat p_h$ converges almost surely to $p_h$, it is unbiased, and it satisfies a CLT. The proofs that we provide in this section follow the reasoning of the proofs in \cite{del2004feynman}. We present how to adjust the original proofs to take  into account that the extensions of the trajectories within a cluster are no longer identically distributed. Finally we show  that the asymptotic variance of the CLT  is   reduced with the IPS+M method.\\

\subsubsection{The martingale decomposition of the anticipated biases}
With the IPS+M method, we assumed that $\forall k, \,\tilde N_k=N$. For $p\leq 2n,$ we define $\mathcal F_p$ the filtration associated to  the sequence of the $p$ first random samples  built with the IPS+M algorithm: $\big(( {\mathbf Z}_{\tau_0}^j)_{j\leq N_0},$ $(\mytilde{\mathbf Z}_{\tau_0}^j)_{j\leq N}, ({\mathbf Z}_{\tau_1}^j)_{j\leq N_1}\,,  \dots,\, \big)$. So when $p$ is an even number such that $p=2k$, $\mathcal F_p$ is adapted to the vector $\big(( {\mathbf Z}_{\tau_0}^j)_{j\leq N_0},$ $(\mytilde{\mathbf Z}_{\tau_0}^j)_{j\leq N},\, $ $\dots,\, ({\mathbf Z}_{\tau_k}^j)_{j\leq N_k}\big)$. For an odd number $p=2k+1$, $\mathcal F_p$ is adapted to the vector $\big(( {\mathbf Z}_{\tau_0}^j)_{j\leq N_0},(\mytilde{\mathbf Z}_{\tau_0}^j)_{j\leq N},$ $  \dots,\,({\mathbf Z}_{\tau_k}^j)_{j\leq  N_k},(\mytilde{\mathbf Z}_{\tau_k}^j)_{j\leq N}\big)$.  For $f\in \mathscr M(\mathbf E_{\tau_n})$ we let $\Gamma^{\mathbf N}_{p,2n} (h) $ be define by
\begin{align}
    \Gamma^{\mathbf N}_{2k,2n}(f)&=\gamma_k^{\mathbf N_k}(Q_{k,n}(f))-\gamma_k (Q_{k,n}(f)) \nonumber \\
    &=\gamma_k^{\mathbf N_k}(Q_{k,n}(f))- \gamma_n(f)
\end{align}
 and
\begin{align}
    \Gamma^{\mathbf N}_{2k+1,2n}(f)&=\tilde \gamma_k^{\tilde{\mathbf  N}_k}(V_k Q_{k+1,n}(f))-\tilde\gamma_k (V_k Q_{k+1,n}(f)) \nonumber \\
    &=\tilde \gamma_k^{\tilde{\mathbf  N}_k}(V_k Q_{k+1,n}(f))- \gamma_n(f).
\end{align}
Using a telescopic argument we get
\begin{align}
    \Gamma^{\mathbf N}_{p,2n}(f)=&\sum_{k=0}^{ \lfloor \frac{p} 2 \rfloor   }
    \gamma_k^{\mathbf N_k}\left(Q_{k,n}(f)\right)-\tilde \gamma_{k-1}^{\tilde{\mathbf  N}_{k-1}}\left(V_{k-1}Q_{k,n}(f)\right)
    \nonumber \\
    +&\indic{p>0}\sum_{k=1}^{  \lfloor \frac{p +1}{2}\rfloor    } \tilde \gamma_{k-1}^{\tilde{\mathbf  N}_{k-1}}\left(V_{k-1}Q_{k,n}(f)\right)-  \gamma_{k-1}^{\mathbf N_{k-1}}\left(Q_{k-1,n}(f)\right) ,
    \label{eq:Gammatelscop}
\end{align}
with the convention for $k=0,\ \tilde \gamma_{ -1}^{\tilde{\mathbf  N}_{ -1}}(V_{-1 }Q_{0,n}(f))=\gamma_n(f) $.\\
Noticing that $\gamma_k^{\mathbf N_k}(1)=\tilde \gamma_{k-1}^{\tilde{\mathbf  N}_{k-1}}(1)=\gamma_{k-1}^{\mathbf N_{k-1}}(G_{k-1})$,  we can rewritte   \eqref{eq:Gammatelscop}  as 
\begin{align}
    \Gamma^{\mathbf N}_{p,2n}(f)=&\sum_{k=0}^{\frac{\lfloor 2p \rfloor }{2}}\gamma_k^{\mathbf N_k}(1) \left(\eta_k^{N_k}(Q_{k,n}(f))- \tilde \eta_{k-1}^{\tilde N_{k-1}}V_{k-1}(Q_{k,n}(f))\right)\nonumber \\
    +&\indic{p>0}\sum_{k=1}^{\frac{\lfloor 2p + 1\rfloor }{2} }\tilde \gamma_{k-1}^{\tilde{\mathbf  N}_{k-1}}(1)\left( \tilde \eta_{k-1}^{\tilde N_{k-1}}(V_{k-1}Q_{k ,n}(f))- \Psi_{k-1}(\eta_{k-1}^{N_{k-1}})(V_{k-1}Q_{k,n}(f))\right) ,
\end{align}
where for $k=0$, we use the convention   $ \gamma_0^{\mathbf N_0}(1)\tilde \eta^{\tilde N_{-1}}_{-1}(V_{-1}Q_{0,n}(f)) =\gamma_n(f)$. The benefit of this decomposition is that it distinguishes the errors associated to the propagation steps  and the errors associated to the selection steps.   For the propagation steps, using \eqref{eq:clusterProp} we easily get that for any $f\in\mathscr M(\mathbf E_{\tau_{k+1}})$: \begin{equation}
    \mathbb E_M\left[  \eta_{k}^{N_k}(f)\ \big|\mathcal F_{2k-1} \right]
    =\tilde{\eta}_{k-1}^{\tilde N_{k-1}} V_{k-1}(f)
    \label{eq:unbiasedProp}.
\end{equation}
For the selection steps, as  the  resampling schemes are the same ones as for the IPS algorithm, we still have  for any $f\in\mathscr M(\mathbf E_{\tau_{k}})$: \begin{equation}
 \mathbb E_M\left[  \tilde{\eta}_{k}^{\tilde N_k}(f) \big| \mathcal F_{2k}\right]
 = \Psi_k(\eta_k^N)(f).
\end{equation} 
Thus, each selection step  and  propagation step is  conditionally unbiased.
Note that $\gamma_k^{\mathbf N_k}(1) $ is $\mathcal F_{2k-1}$-measurable  and $\gamma_{k}^{\tilde{\mathbf  N}_{k}}(1)$ is $\mathcal F_{2k}$-measurable, so,  when the samples are generated with  the IPS+M algorithm, $(\Gamma^{\mathbf N}_{p,2n}(h))_{p\leq2n}  $ is a $\mathcal F_p$-martingale. Therefore,  $\hat p_h$ stays unbiased with the IPS+M method, because $$\mathbb E_M[\Gamma^{\mathbf N}_{2n,2n}(f_h)]=\mathbb E_M[\hat p_h-p_h]=0.$$
 
\subsubsection{Almost sure convergence}

Thanks to this  martingale decomposition, we can use the same arguments as in the proof in the chapter 7  in \cite{del2004feynman}.  The hypotheses of Theorems 7.4.2 and 7.4.3 \cite[page 239 and 241]{del2004feynman} are satisfied  with the IPS+M method too, which yields the following theorem: 
\begin{Theorem}
For any $h\in\mathscr M (\mathbf E_{\tau_n}) $, $\hat p_h$ converges almost surely to $p_h$, and,   for any $f\in\mathscr M (\mathbf E_{\tau_k})$, $\eta_k^{N_k}(f)$ converges to $\eta_k(f)$ almost surely, $\gamma_k^{N_k}(f)$ converges to $\gamma_k(f)$ almost surely.
\end{Theorem}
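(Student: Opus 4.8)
The plan is to establish, uniformly in $N$, $L^p$-moment bounds on the increments of the $\mathcal F_p$-martingale $\big(\Gamma^{\mathbf N}_{p,2n}(f)\big)_{p\le 2n}$ constructed above, and then to run verbatim the arguments of Theorems 7.4.2 and 7.4.3 of \cite{del2004feynman}. Since the IPS+M selection step uses exactly the same multinomial resampling as the IPS, the odd (selection) increments $\tilde\gamma_{k-1}^{\tilde{\mathbf N}_{k-1}}(1)\big(\tilde\eta_{k-1}^{\tilde N_{k-1}}(g)-\Psi_{k-1}(\eta_{k-1}^{N_{k-1}})(g)\big)$ are controlled exactly as in \cite{del2004feynman}, so the only increments requiring a fresh argument are the even (propagation) ones, $\gamma_k^{\mathbf N_k}(1)\big(\eta_k^{N_k}(g)-\tilde\eta_{k-1}^{\tilde N_{k-1}}V_{k-1}(g)\big)$, where the within-cluster extensions are no longer identically distributed.

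To treat a propagation increment I would condition on $\mathcal F_{2k-1}$, which freezes the selected trajectories $\mathbf Z_{\tau_{k-1}}^j$ and the cluster sizes $\tilde N_{k-1}^j$, with $\sum_j\tilde N_{k-1}^j=\tilde N_{k-1}=N$. Writing $\eta_k^{N_k}(g)$ as in the display preceding \eqref{eq:estigamma2}, its deterministic part — the sum of the $\delta_{\mathbf a_{\tau_k}^{k-1,j}}$ contributions — cancels against the corresponding part of $\tilde\eta_{k-1}^{\tilde N_{k-1}}V_{k-1}(g)$, read from \eqref{eq:probaA}, leaving
\[\eta_k^{N_k}(g)-\tilde\eta_{k-1}^{\tilde N_{k-1}}V_{k-1}(g)=\frac1N\sum_{\ell=1}^{N}\Big(Y_\ell-\mathbb E_M\big[Y_\ell\,\big|\,\mathcal F_{2k-1}\big]\Big),\]
where $\ell$ ranges over all $N$ conditioned extensions (across all non-empty clusters), $Y_\ell=\big(1-V_{k-1}(\mathbf a_{\tau_k}^{k-1,j(\ell)}|\mathbf Z_{\tau_{k-1}}^{j(\ell)})\big)\,g(\mathbf Z_{\tau_k}^{(\ell)})$, $\mathbf Z_{\tau_k}^{(\ell)}\sim\bar V_{k-1}(\cdot|\mathbf Z_{\tau_{k-1}}^{j(\ell)})$, and the $Y_\ell$ are, given $\mathcal F_{2k-1}$, independent and bounded by the supremum norm $\|g\|_\infty$. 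A conditional Marcinkiewicz--Zygmund (equivalently Burkholder--Rosenthal) inequality then gives, for every even $p\ge2$,
\[\mathbb E_M\big[\big|\eta_k^{N_k}(g)-\tilde\eta_{k-1}^{\tilde N_{k-1}}V_{k-1}(g)\big|^{p}\,\big|\,\mathcal F_{2k-1}\big]\le C_p\,\|g\|_\infty^{p}\,N^{-p/2},\]
the same $N^{-p/2}$ rate as in the i.i.d.\ case. Combined with the deterministic bounds $\varepsilon_2^{\,k}\le\gamma_k^{\mathbf N_k}(1)\le\varepsilon_1^{\,k}$ supplied by (G) (recall $\gamma_k^{\mathbf N_k}(1)=\prod_{s<k}\eta_s^{N_s}(G_s)$), the telescoping identity \eqref{eq:Gammatelscop}, and the conditional Minkowski inequality, this yields $\mathbb E_M\big[\big|\gamma_k^{\mathbf N_k}(Q_{k,n}(f))-\gamma_n(f)\big|^{p}\big]\le C_{p,n}\,\|f\|_\infty^{p}\,N^{-p/2}$ for all $k\le n$, and similarly for the odd partial sums.

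From these moment estimates the three almost-sure statements follow exactly as in \cite[Chapter 7]{del2004feynman}: taking any $p>2$, Markov's inequality together with the Borel--Cantelli lemma gives $\gamma_k^{\mathbf N_k}(Q_{k,n}(f))\to\gamma_n(f)$ a.s., hence — after specializing $f$ and using $\gamma_k^{N_k}(1)=\prod_{s<k}\eta_s^{N_s}(G_s)$ — $\gamma_k^{N_k}(f)\to\gamma_k(f)$ a.s.\ for every $f\in\mathscr M(\mathbf E_{\tau_k})$; dividing by $\gamma_k^{N_k}(1)\ge\varepsilon_2^{\,k}>0$ gives $\eta_k^{N_k}(f)\to\eta_k(f)$ a.s.; and finally, by \eqref{eq:ph} and stability of a.s.\ limits under finite products, $\hat p_h=\eta_n^{N_n}(f_h)\prod_{k=0}^{n-1}\eta_k^{N_k}(G_k)\to\eta_n(f_h)\prod_{k=0}^{n-1}\eta_k(G_k)=p_h$ a.s. The one genuinely new ingredient, and therefore the main obstacle, is the observation that replacing each cluster by one deterministic replicate carrying the preponderant mass plus $\tilde N_{k-1}^j$ conditionally i.i.d.\ replicates drawn from $\bar V_{k-1}$ keeps the propagation error a normalized sum of exactly $N$ conditionally independent bounded terms, so that the $N^{-p/2}$ rate — on which the whole Borel--Cantelli argument rests — is preserved; once this is verified, the remainder is a line-by-line transcription of the IPS proof.
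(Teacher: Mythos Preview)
Your proposal is correct and follows the same route as the paper: use the martingale decomposition \eqref{eq:Gammatelscop} and then verify that the hypotheses of Theorems 7.4.2 and 7.4.3 in \cite{del2004feynman} still hold for the IPS+M scheme. The paper simply asserts this; you actually carry out the verification, and your key observation --- that the deterministic preponderant-trajectory contributions cancel exactly, leaving the propagation increment as $N^{-1}$ times a sum of precisely $N=\tilde N_{k-1}$ conditionally independent, bounded, centered terms --- is the correct and essential point, yielding the $N^{-p/2}$ moment rate on which the Borel--Cantelli argument rests.
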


\subsubsection{A Central Limit Theorem}
\begin{Theorem}
If the potential functions verify \ref{eq:(G)} and the samples are generated with the IPS+M algorithm ,then we have the following convergence in distribution:
$$
\sqrt{N}(\hat p_h-p_h)  \underset{N\to\infty}{\longrightarrow} \mathcal N\big(0,\sigma^2_{M,G}\big),
$$
where 
\begin{align}
\sigma^2_{M,G}=\ &\eta_0\left( \Big[Q_{0,n}(f_h)-\eta_0Q_{0,n}(f_h)\Big]^2\right)\nonumber\\
+&\sum_{k=1}^{n } \gamma_k(1)^2\tilde \eta_{k\text{-}1}\left( \big(1-V_{k-1}(\mathbf a_{\tau_{k}}| \mathbf Z_{\tau_{k\text{-}1}}) \big)^2   \bar V_{k\text{-}1}\Big[Q_{k,n}(f_h)- \bar V_{k\text{-}1}Q_{k,n}(f_h) \Big]^2 \right)\nonumber\\
+&\sum_{k=1}^{n }\tilde \gamma_k(1)^2\tilde \eta_{k\text{-}1}\left( \Big[V_{k\text{-}1} Q_{k ,n}(f_h)-\tilde \eta_{k\text{-}1} V_{k\text{-}1} Q_{k ,n}(f_h) \Big]^2\right) .
\end{align}
\end{Theorem}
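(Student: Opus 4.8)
The plan is to mimic the classical Feynman--Kac CLT argument (as in \cite[Chapter 9]{del2004feynman}) via the martingale decomposition of the anticipated biases that was set up in the previous subsection. Recall we have written
\begin{align}
\sqrt N\,\big(\hat p_h-p_h\big) = \sqrt N\,\Gamma^{\mathbf N}_{2n,2n}(f_h)
= \sqrt N \sum_{p=1}^{2n}\Big(\Gamma^{\mathbf N}_{p,2n}(f_h)-\Gamma^{\mathbf N}_{p-1,2n}(f_h)\Big),
\end{align}
and that $(\Gamma^{\mathbf N}_{p,2n}(f_h))_{p\le 2n}$ is an $\mathcal F_p$-martingale. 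The first step is thus to identify the individual martingale increments $\Delta_p := \Gamma^{\mathbf N}_{p,2n}(f_h)-\Gamma^{\mathbf N}_{p-1,2n}(f_h)$, separating the odd indices $p=2k-1$ (selection steps) from the even indices $p=2k$ (propagation steps). For the selection increments the error term is $\tilde\gamma_{k-1}^{\tilde{\mathbf N}_{k-1}}(1)\big(\tilde\eta_{k-1}^{\tilde N_{k-1}}-\Psi_{k-1}(\eta_{k-1}^{N_{k-1}})\big)(V_{k-1}Q_{k,n}(f_h))$, which is exactly the IPS selection error since the multinomial resampling scheme is unchanged; for the propagation increments the error term is $\gamma_k^{\mathbf N_k}(1)\big(\eta_k^{N_k}-\tilde\eta_{k-1}^{\tilde N_{k-1}}V_{k-1}\big)(Q_{k,n}(f_h))$, which is where the IPS+M modification enters.

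The second step is to invoke a triangular-array martingale CLT (Lindeberg--type, e.g.\ the version used in \cite{del2004feynman}): it suffices to (i) show the conditional Lindeberg condition holds — this follows from hypothesis \ref{eq:(G)}, which makes $Q_{k,n}(f_h)$ bounded, and from the boundedness of the weights, so each increment is $O(1/\sqrt N)$ uniformly; and (ii) compute the limit of the sum of conditional variances $\sum_p \mathbb E_M[\Delta_p^2\mid\mathcal F_{p-1}]$. The almost sure convergence $\eta_k^{N_k}\to\eta_k$, $\tilde\eta_k^{\tilde N_k}\to\tilde\eta_k$, $\gamma_k^{N_k}(1)\to\gamma_k(1)$ from the preceding theorem will be used to replace the empirical measures and normalizing constants by their deterministic limits in these conditional variances.

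The heart of the proof — and the one genuinely new computation — is the conditional variance of a propagation increment. Conditionally on $\mathcal F_{2k-1}$, the sample $(\mathbf Z_{\tau_k}^{j_i})$ is, cluster by cluster $j$, one deterministic atom at $\mathbf a_{\tau_k}^{k-1,j}$ carrying a fixed weight plus $\tilde N_{k-1}^j$ i.i.d.\ draws from $\bar V_{k-1}(\cdot\mid\mathbf Z_{\tau_{k-1}}^{j})$ each with weight $(1-V_{k-1}(\mathbf a^{k-1,j}_{\tau_k}\mid\mathbf Z^j_{\tau_{k-1}}))/( \tilde N_{k-1}\tilde N_{k-1}^j)$ times the cluster mass. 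The deterministic atom contributes zero variance — this is precisely the variance that IPS wastes and IPS+M removes — so only the $\bar V_{k-1}$-draws fluctuate, and a direct second-moment computation gives a conditional variance proportional to $\sum_j \frac{\tilde N_{k-1}^j}{\tilde N_{k-1}}\big(1-V_{k-1}(\mathbf a^{k-1,j}_{\tau_k}\mid\cdot)\big)^2\,\bar V_{k-1}\big[Q_{k,n}(f_h)-\bar V_{k-1}Q_{k,n}(f_h)\big]^2$, which passes in the limit to the middle term
\begin{align}
\gamma_k(1)^2\,\tilde\eta_{k-1}\Big(\big(1-V_{k-1}(\mathbf a_{\tau_k}\mid\mathbf Z_{\tau_{k-1}})\big)^2\,\bar V_{k-1}\big[Q_{k,n}(f_h)-\bar V_{k-1}Q_{k,n}(f_h)\big]^2\Big).
\end{align}
The selection increments, being identical to those of IPS, contribute the classical term $\tilde\gamma_k(1)^2\,\tilde\eta_{k-1}\big([V_{k-1}Q_{k,n}(f_h)-\tilde\eta_{k-1}V_{k-1}Q_{k,n}(f_h)]^2\big)$, and the $k=0$ initialization term $\eta_0([Q_{0,n}(f_h)-\eta_0Q_{0,n}(f_h)]^2)$ comes out as in IPS. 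Summing over $k$ yields $\sigma^2_{M,G}$. I expect the main obstacle to be bookkeeping: carefully setting up the cluster-indexed triangular array so that the within-cluster i.i.d.\ structure is visible, handling the fact that $N_k$ (hence the array size) is random but $\mathcal F_{2k-1}$-measurable, and checking that the cross-cluster terms vanish and that the normalization $\tilde N_{k-1}=N$ makes the weights sum correctly; the limit identification itself is then routine given the a.s.\ convergence results already established.
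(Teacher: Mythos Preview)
Your proposal is correct and follows essentially the same route as the paper: both set up a particle-level triangular array from the martingale decomposition of $\Gamma^{\mathbf N}_{p,2n}(f_h)$, verify Lindeberg via the uniform bound coming from \ref{eq:(G)}, compute the bracket cluster by cluster (noting that the deterministic atom at $\mathbf a_{\tau_k}$ contributes zero variance so only the $\bar V_{k-1}$-draws fluctuate), and pass to the limit using the almost sure convergence of $\eta_k^{N_k}$, $\tilde\eta_k^{\tilde N_k}$, $\gamma_k^{\mathbf N_k}(1)$. The only organizational difference is that the paper first proves a CLT for the unnormalized error martingale $M^{\mathbf N}_{p,2n}(f)$ (for a generic sequence $(f_k)$, without the $\gamma_k(1)$ factors) and then brings in the random normalizing constants at the end via Slutsky, whereas you keep the predictable factors $\gamma_k^{\mathbf N_k}(1)$ inside the increments from the start; both are standard and equivalent.
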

 
\begin{proof}
This proof is very similar to what is done in Chapter  9 of \cite{del2004feynman}. In order to prove that $\hat p_h$ satisfies a CLT,  we begin by proving that the errors associated to the selection and propagation steps are normally distributed using  Lindeberg's theorem.\\
For a sequence of function $(f_k)_{k\leq 2n}$  such that $f_{2k}$ and $f_{2k+1}$ are in $\mathscr M(\mathbf E_{\tau_k})$, we define the sum of errors until   the $p^{th}$ selection and propagation by:
\begin{align}
    M^{\mathbf N}_{p,2n}(f)=
    & \sum_{k=0}^{ \lfloor\frac{ p }{2}  \rfloor  }        \eta_k^{N_k}(f_{2k})                    - \tilde \eta_{k-1}^{\tilde N_{k-1}}V_{k-1}(f_{2k}) \nonumber \\
    &+\indic{p>0} \sum_{k=1}^{ \lfloor\frac{ p+1 }{2}  \rfloor  } \tilde \eta_{k-1}^{\tilde N_{k-1}}(  f_{2k-1})- \Psi_{k-1}(\eta_{k-1}^{N_{k-1}})(f_{2k-1}) .
\end{align}
For $j \in \{1,\dots N\}$ we let 
\begin{equation}
 U_{(2k+1)N+j}^{N}(f ) =\frac 1 {\sqrt {N}} \left(f_{2k+1}(\mytilde{\mathbf Z}_k^j) -  \Psi_{k}(\eta_k^{N_{k}})(f_{2k+1}) \right).
 \end{equation}
 For $k\geq0$, $j \in \{1,\dots  , N_{k}\}$  and $i \in \{0,\dots ,  N^j\}$, we consider that the indices $j_i$ are ordered in such way that $j_0> N$ and $j_i< N$ when $i>0$. With such indexing $\forall s \in \{1,\dots,   N\}$, $\exists j \in \{1,\dots,   N_{k}\}$ and $i \in \{1,\dots,   N^j\}$ such that $s=j_i$, and for such $s$ we let  
 \begin{equation}
 U_{2kN+s}^{N}(f ) =\frac{1-V_k\big(\mathbf a_{\tau_{k+1}}^{k,j}|\mathbf Z_{\tau_{k}}^{j}\big)} {\sqrt {N}} \left(f_{2(k+1)}(\mathbf Z_{k+1}^{j_i}) -   \bar V_{k}(f_{2(k+1)})(\mathbf Z_k^{j}) \right).
 \end{equation}
For  $j \in \{1,\dots,   N_{0}\}$, let
  \begin{equation}
 U_{ j}^{N}(f ) =\frac{1} {\sqrt {N}} \left(f_{0}(\mathbf Z_{0}^{j}) -  \eta_0(f_{0}) \right).
 \end{equation}
 Thus,
  \begin{equation}
\sqrt N   M^{\mathbf N}_{p,2n}(f)=\sum_{k=0}^{(p+1)N}U_{k}^{N}(f).
 \end{equation}
 Noting $\mathcal P_k^N$ a filtration adapted to the $k$ first trajectories generated in the IPS+M algorithm. Note that we have that $
\mathbb E\left[U_{k}^{N}(f) |\mathcal P_{k-1}^N\right] =0$,  and $  \mathbb E\left[U_{k}^{N}(f)^2 |\mathcal P_{k-1}^N\right]<\infty$, 
and $ |U_{k}^{N}(f)|<\frac{2}{\sqrt{N}}\underset{k\leq  n, \mathbf  Z_{\tau_k}\in \mathbf  E_{\tau_k} }{\sup }\{|f_{2k}(\mathbf  Z_{\tau_{k}})| \wedge|f_{2k+1}(\mathbf  Z_{\tau_{k}}) |  \} 
$, so the Lindeberg condition is clearly satisfied. Then, we have that \begin{align}
  \langle \sqrt{N} M^{\mathbf N}_{p,2n}(f)\rangle_p  &=\sum_{k=0}^{(p+1)N}\mathbb E\left[U_{k}^{N}(f)^2|\mathcal P_{k-1}^N\right]\nonumber\\
  &  =\eta_0^N\left( \Big[f_{0}-\eta_0^N(f_{0})\Big]^2\right)\nonumber\\
&\quad +\sum_{k=1}^{\frac{\lfloor 2p   \rfloor}{2}  }  \tilde \eta_{k\text{-}1}^N\left( \big(1-V_{k-1}(\mathbf a_{\tau_{k}}| \mathbf Z_{\tau_{k\text{-}1}}) \big)^2  \bar V_{k\text{-}1}\Big[f_{2k}- \bar V_{k\text{-}1}f_{2k} \Big]^2\Big)\right)\nonumber\\
&\quad +\sum_{k=1}^{\frac{\lfloor 2p +1\rfloor}{2}} \tilde \eta_{k\text{-}1}^N\left( \Big[  f_{2k-1}(\mathbf Z_{\tau_{k\text{-}1}})-\Psi_{k\text{-}1}(\eta_{k\text{-}1}^{N_{k\text{-}1}})  f_{2k-1} \Big]^2\right) .  \nonumber\\
\end{align}
As $\eta_{k }^{N_{k }}$ and $\tilde \eta_{k }^{N }$ converge almost surely to $\eta_{k } $ and $\tilde \eta_{k } $,    $ \langle  \sqrt{N}M^{\mathbf N}_{p,2n}(f)\rangle_n  $ converge in probability to  
\begin{align}
  \sigma_p^2(f)
  &  =\eta_0\left( \Big[f_0-\eta_0 (f_0)\Big]^2\right)\nonumber\\
&\quad +\sum_{k=1}^{\frac{\lfloor  p   \rfloor}{2}  }  \tilde \eta_{k\text{-}1}\left( \big(1-V_{k-1}(\mathbf a_{\tau_{k}}| \mathbf Z_{\tau_{k\text{-}1}}) \big)^2 \bar V_{k\text{-}1}\Big[f_{2k}- \bar V_{k\text{-}1}f_{2k } \Big]^2\Big)\right)\nonumber\\
&\quad +\sum_{k=1}^{\lfloor \frac{ p +1}{2}\rfloor} \tilde \eta_{k\text{-}1}\left( \Big[ f_{2k-1}(\mathbf Z_{\tau_{k\text{-}1}})-\tilde\eta_{k\text{-}1} f_{2k-1} \Big]^2\right)  . \nonumber\\
\end{align}
By application of the Lindeberg's theorem for triangular array (see for instance  Theorem 4 on page 543 in \cite{shiryaev1996probability}),  we get that $\sqrt{N}M^{\mathbf N}_{p,2n}(f)$ converges in law to a centered Gaussian of variance $\sigma_p^2(f)$. As a corollary, if for $  p\neq2k$ we take $f_p=0 $ and for $p=2k$ $f_{2k}=Q_{k,n}(f_h) $, we get that \begin{align*} \sqrt N \bigg( \eta_k^{N_k} Q_{k,n}(f_h)  &- \tilde \eta_{k-1}^{N}V_{k-1}Q_{k,n}(f_h)\bigg) \\
&\underset{N\to\infty}{\longrightarrow}  \tilde \eta_{k\text{-}1}\bigg( \big(1-V_{k-1}(\mathbf a_{\tau_{k}}| \mathbf Z_{\tau_{k\text{-}1}}) \big)^2 \bar V_{k\text{-}1}\Big[Q_{k,n}(f_h) -\ \bar V_{k\text{-}1}Q_{k,n}(f_h)\Big]^2\Big)\bigg)
\end{align*}
and if for $p\neq 2k-1$ we take $f_p=0$ and for $p=2k-1$ $f_{2k-1}=V_{k-1}Q_{k,n}(f_h) $, we get that 
 \begin{align*} \sqrt N  \bigg(\tilde \eta_{k-1}^{N}( V_{k-1}Q_{k,n}(f_h))&- \Psi_{k-1}(\eta_{k-1}^{N_{k-1}})(V_{k-1}Q_{k,n}(f_h))\bigg)\\
 &\underset{N\to\infty}{\longrightarrow}\tilde \eta_{k\text{-}1}\bigg( \Big[ V_{k-1}Q_{k,n}(f_h)-\tilde \eta_{k\text{-}1} V_{k-1}Q_{k,n}(f_h) \Big]^2\bigg) .
 \end{align*} 
 Following from Theorem 4, $\gamma_k^{N_k}(1)^2 \ $ and  $\tilde \gamma_k^{N}(1)^2  $ converges almost surely  to $\gamma_k (1)^2 $ and  $\tilde \gamma_k(1)^2$ ,   by an application of Slutsky's Lemma, we get that $\sqrt N  \Gamma^{\mathbf N}_{2N,2n}(f_h )$ converges in law to a centered Gaussian with variance
 \begin{align}
  \sigma_{M,G}^2 
  &  =\gamma_0(1)^2\eta_0\left( \Big[Q_{0,n}(f_h)-\eta_0Q_{0,n}(f_h)\Big]^2\right)\nonumber\\
&\quad +\sum_{k=1}^{n } \gamma_k(1)^2 \tilde \eta_{k\text{-}1}\left( \big(1-V_{k-1}(\mathbf a_{\tau_{k}}| \mathbf Z_{\tau_{k\text{-}1}}) \big)^2 \bar V_{k\text{-}1}\Big[Q_{k,n}(f_h)- \bar V_{k\text{-}1}Q_{k,n}(f_h) \Big]^2\Big)\right)\nonumber\\
&\quad +\sum_{k=1}^{n} \tilde \gamma_{k\text{-}1}(1)^2\tilde \eta_{k\text{-}1}\left( \Big[ V_{k\text{-}1} Q_{k ,n}(f_h)-\tilde \eta_{k\text{-}1} V_{k\text{-}1} Q_{k ,n}(f_h) \Big]^2\right)  .
\end{align}
\end{proof}
\subsubsection{Variance reduction}
\begin{Theorem}

The variance of the original IPS can be decomposed as follows:
\begin{align}
\sigma^2_{IPS,G}&= \sigma^2_{M,G} + \sum_{k=1}^{n } \gamma_k(1)^2\tilde \eta_{k\text{-}1}\left(v_k(\mathbf Z_{\tau_{k\text{-}1}})  \bar V_{k\text{-}1}\Big(\Big[ Q_{k,n}(f_h)(\mathbf a_{\tau_{k}})- Q_{k,n}(f_h)(\mathbf Z_{\tau_k }) \Big]^2\Big)\right) ,\label{eq:stroumph3}
\end{align}
where $v_k(\mathbf Z_{\tau_{k\text{-}1}})=V_{k\text{-}1}(\mathbf a_{\tau_{k}}| \mathbf Z_{\tau_{k\text{-}1}})\big(1-V_{k\text{-}1}(\mathbf a_{\tau_{k}}| \mathbf Z_{\tau_{k\text{-}1}}) \big) $.
Therefore we have $\sigma_{M,G}^2\leq \sigma_{IPS,G}^2 $.
\end{Theorem}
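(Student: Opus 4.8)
The plan is to prove the decomposition \eqref{eq:stroumph3} directly, by matching the closed-form expression of $\sigma^2_{IPS,G}$ recalled in Section~\ref{sec:IPS} against the three families of terms in the formula for $\sigma^2_{M,G}$ just obtained, and then to read off the inequality. Once \eqref{eq:stroumph3} holds the bound $\sigma^2_{M,G}\le\sigma^2_{IPS,G}$ is immediate: the quantity added to $\sigma^2_{M,G}$ is a sum of terms $\gamma_k(1)^2\,\tilde\eta_{k\text{-}1}\big(v_k(\cdot)\,\bar V_{k\text{-}1}([\,Q_{k,n}(f_h)(\mathbf a_{\tau_k})-Q_{k,n}(f_h)\,]^2)\big)$ with $\gamma_k(1)^2\ge0$, $v_k=V_{k\text{-}1}(\mathbf a_{\tau_k}|\cdot)\big(1-V_{k\text{-}1}(\mathbf a_{\tau_k}|\cdot)\big)\ge0$, $\bar V_{k\text{-}1}$ a probability kernel, $\tilde\eta_{k\text{-}1}$ a probability measure, and the integrand a square, so every added term is non-negative.

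For the decomposition itself, fix $k$ and set $g=Q_{k,n}(f_h)\in\mathscr M(\mathbf E_{\tau_k})$; for a trajectory $\mathbf z\in\mathbf E_{\tau_{k\text{-}1}}$ write $v=V_{k\text{-}1}(\mathbf a_{\tau_k}|\mathbf z)$ and let $\mathbf a$ be the preponderant continuation of $\mathbf z$. Equations \eqref{eq:clusterProp} and \eqref{eq:probaA}, together with the definition of $\bar V_{k\text{-}1}(\cdot|\mathbf z)$ as the conditioning of $V_{k\text{-}1}(\cdot|\mathbf z)$ to $\mathbf E_{\tau_k}\backslash\{\mathbf a\}$, mean precisely that $V_{k\text{-}1}(\cdot|\mathbf z)=v\,\delta_{\mathbf a}+(1-v)\,\bar V_{k\text{-}1}(\cdot|\mathbf z)$, hence $V_{k\text{-}1}(g)(\mathbf z)=v\,g(\mathbf a)+(1-v)\,\bar V_{k\text{-}1}(g)(\mathbf z)$. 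The heart of the proof is the ``law of total variance'' for this two-point mixture: expanding the conditional variance of one propagation and using $\bar V_{k\text{-}1}(g^2)=\bar V_{k\text{-}1}\big([g-\bar V_{k\text{-}1}g]^2\big)+(\bar V_{k\text{-}1}g)^2$ and $\bar V_{k\text{-}1}\big([g(\mathbf a)-g]^2\big)=\big[g(\mathbf a)-\bar V_{k\text{-}1}g\big]^2+\bar V_{k\text{-}1}\big([g-\bar V_{k\text{-}1}g]^2\big)$, one gets after collecting terms
\begin{equation*}
V_{k\text{-}1}(g^2)-(V_{k\text{-}1}g)^2=(1-v)^2\,\bar V_{k\text{-}1}\big([g-\bar V_{k\text{-}1}g]^2\big)+v(1-v)\,\bar V_{k\text{-}1}\big([g(\mathbf a)-g]^2\big).
\end{equation*}
So ``peeling off the atom $\mathbf a$'' decreases the conditional variance of one propagation by exactly the $k$-th correction term $v(1-v)\,\bar V_{k\text{-}1}([g(\mathbf a)-g]^2)$.

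Next I would integrate this pointwise identity against $\tilde\eta_{k\text{-}1}$ and multiply by $\gamma_k(1)^2$: the first right-hand term is the $k$-th propagation term of $\sigma^2_{M,G}$, the second is the $k$-th correction term of \eqref{eq:stroumph3}, and the left-hand side is $\gamma_k(1)^2\big(\tilde\eta_{k\text{-}1}(V_{k\text{-}1}(g^2))-\tilde\eta_{k\text{-}1}((V_{k\text{-}1}g)^2)\big)$. Adding the $k$-th selection term of $\sigma^2_{M,G}$, which equals $\gamma_k(1)^2\big(\tilde\eta_{k\text{-}1}((V_{k\text{-}1}g)^2)-(\tilde\eta_{k\text{-}1}V_{k\text{-}1}g)^2\big)$ (recall $\tilde\gamma_{k\text{-}1}(1)=\gamma_k(1)$), the $(V_{k\text{-}1}g)^2$ contributions cancel and there remains $\gamma_k(1)^2\big(\tilde\eta_{k\text{-}1}(V_{k\text{-}1}(g^2))-(\tilde\eta_{k\text{-}1}V_{k\text{-}1}g)^2\big)$. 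Using $\tilde\eta_{k\text{-}1}V_{k\text{-}1}=\eta_k$ this is $\gamma_k(1)^2\big(\eta_k(g^2)-(\eta_k g)^2\big)=\gamma_k(1)^2\,\eta_k\big([\,Q_{k,n}(f_h)-\eta_kQ_{k,n}(f_h)\,]^2\big)$, i.e.\ exactly the $k$-th summand of $\sigma^2_{IPS,G}$. Summing over $k$, aligning the ranges of summation of the two CLT formulas, and checking that the endpoint/initialisation contributions coincide (here $\gamma_0(1)=1$ and $\eta_0$ is a Dirac mass, so those contributions are equal and vanish) yields \eqref{eq:stroumph3}.

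The computations involved — the mixture-variance expansion, tracking the constants through $\gamma_k(1)=\tilde\gamma_{k\text{-}1}(1)=\gamma_{k\text{-}1}(1)\,\eta_{k\text{-}1}(G_{k\text{-}1})$, and the algebra of $\Psi_{k\text{-}1}$ and of $\tilde\eta_{k\text{-}1}V_{k\text{-}1}=\eta_k$ — are routine. The step that genuinely needs care is the claim that $V_{k\text{-}1}(\cdot|\mathbf z)$ is the \emph{exact} two-point mixture $v\,\delta_{\mathbf a}+(1-v)\,\bar V_{k\text{-}1}(\cdot|\mathbf z)$, i.e.\ that the preponderant continuation $\mathbf a$ is a genuine atom of the one-step transition kernel; this is precisely where the concentrated-PDMP structure is used — the discreteness of $\nu_{z^-}$ and $\mu_z$ assumed in Section~\ref{sec:model} makes the no-jump/boundary-jump continuation carry a point mass, which is what makes $\bar V_{k\text{-}1}$ well defined and the peeling above exact. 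Everything else is substitution and rearrangement.
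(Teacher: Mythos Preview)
Your proposal is correct and follows essentially the same route as the paper. Both arguments rewrite the $k$-th IPS term via $\eta_k=\tilde\eta_{k\text{-}1}V_{k\text{-}1}$ as a ``conditional variance $+$ variance of conditional mean'' split, then expand the conditional-variance piece $V_{k\text{-}1}\big([g-V_{k\text{-}1}g]^2\big)$ using the mixture $V_{k\text{-}1}=v\,\delta_{\mathbf a}+(1-v)\,\bar V_{k\text{-}1}$ to obtain exactly the identity $V_{k\text{-}1}(g^2)-(V_{k\text{-}1}g)^2=(1-v)^2\,\bar V_{k\text{-}1}\big([g-\bar V_{k\text{-}1}g]^2\big)+v(1-v)\,\bar V_{k\text{-}1}\big([g(\mathbf a)-g]^2\big)$; your ``law of total variance'' framing and the paper's direct algebraic expansion are the same computation, and the remaining bookkeeping ($\tilde\gamma_{k\text{-}1}(1)=\gamma_k(1)$, matching the $k=0$ terms) is identical.
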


 \begin{proof}

\begin{align}
\sigma^2_{IPS,G}&=\sum_{k=0}^{ n}  \gamma_k(1)^2   \eta_{k} \Big(\big[ Q_{k,n}(f_h)-\eta_{k}Q_{k,n}(f_h)
\big]^2\Big) \\
&=\eta_0\left( \Big[Q_{0,n}((f_h)-\eta_0Q_{0,n}((f_h)\Big]^2\right)\nonumber\\
&+\sum_{k=1}^{n }  \gamma_k(1)^2  \tilde \eta_{k-1}V_{k-1}\Big(\big[ Q_{k,n}(f_h)-V_{k-1}Q_{k ,n}(f_h)+V_{k-1}Q_{k ,n}(f_h)-\eta_{k}Q_{k,n}(f_h)
\big]^2\Big)\nonumber \\
\textbf{}&=\eta_0\left( \Big[Q_{0,n}((f_h)-\eta_0Q_{0,n}((f_h)\Big]^2\right)\nonumber\\
&+\sum_{k=1}^{n } \gamma_k(1)^2 \tilde \eta_{k-1}V_{k-1}\left( \Big[Q_{k,n}(f_h)-  V_{k-1}Q_{k ,n}(f_h) \Big]^2 \right)\nonumber\\
&+\sum_{k=1}^{n }\tilde \gamma_{k-1}(1)^2\tilde \eta_{k-1}\left( \Big[V_{k-1} Q_{k ,n}(f_h)-\tilde \eta_{k-1} V_{k-1} Q_{k,n}(f_h) \Big]^2\right) \label{eq:stroumph1}
\end{align}
 Temporarily using the notation $V_{k\text{-}1}(\mathbf a_{\tau_{k}}| \mathbf Z_{\tau_{k\text{-}1}}) =p_k $, for any $f\in\mathscr M(\mathbf E_{\tau_k})$, we get
\begin{align}
  V_{k-1}&\left( \Big[f(\mathbf Z_{\tau_k})-  V_{k-1}f \Big]^2\right)\nonumber\\
&=    V_{k-1}\left( \Big[f(\mathbf Z_{\tau_k})-p_k f(\mathbf a_{\tau_k})- (1-p_k)\bar V_{k-1}(f) \Big]^2\right)\nonumber\\
&=    V_{k-1}\left( f(\mathbf Z_{\tau_k})^2
-2p_k f(\mathbf Z_{\tau_k})f(\mathbf a_{\tau_k})
-2 (1-p_k) f(\mathbf Z_{\tau_k}) \bar V_{k-1}f \right.\nonumber\\
&\qquad\quad \left. + \,p_k^2 f(\mathbf a_{\tau_k})^2
+2p_k f(\mathbf a_{\tau_k})\bar V_{k-1}f
+(1-p_k)^2\big(\bar V_{k-1}f\big)^2\right)\nonumber\\
&=  p_k f(\mathbf a_{\tau_k})^2
+(1-p_k)\bar V_{k-1}(f^2)\nonumber\\
&\quad -2p_k^2  f(\mathbf a_{\tau_k})^2
-2p_k(1-p_k)  f(\mathbf a_{\tau_k}) \bar V_{k-1}(f)\nonumber\\
&\quad -2p_k (1-p_k)f(\mathbf a_{\tau_k}) \bar V_{k-1}f -2   (1-p_k)f^2 \bar V_{k-1}(f)^2  \nonumber\\
& \quad  + \,p_k^2 f(\mathbf a_{\tau_k})^2
+2p_k f(\mathbf a_{\tau_k})\bar V_{k-1}f
+(1-p_k)^2 \bar V_{k-1}(f)^2 \nonumber\\
&=  p_k (1-p_k)\big[ f(\mathbf a_{\tau_k})^2
-2f(\mathbf a_{\tau_k})\bar V_{k-1}f+\bar V_{k-1}(f^2)\big]\nonumber\\
&\quad+(1-p_k)^2\big(\bar V_{k-1}(f^2)-\bar V_{k-1}(f)^2\big)\nonumber\\
&=  p_k (1-p_k)\bar V_{k-1}\Big(\big[f(\mathbf a_{\tau_k}) -
 f(\mathbf Z_{\tau_k})\big]^2\Big) +(1-p_k)^2\Big(\bar V_{k-1}(f^2)-\bar V_{k-1}(f)^2\Big).\end{align}
  In particular, for $f=Q_{k,n}(f_h)$ we get
 \begin{align}
      V_{k-1} &\left( \Big[Q_{k,n}(f_h)-  V_{k-1}Q_{k,n}(f_h)\Big]^2\right)\nonumber\\
     &=p_k (1-p_k)\bar V_{k-1}\Big(\big[Q_k,n(f_h)(\mathbf a_{\tau_k}) -
 Q_k,n(f_h)\big]^2\Big) +(1-p_k)^2\Big(\bar V_{k-1}(f^2)-\bar V_{k-1}(f)^2\Big) \label{eq:stroumph2}
 \end{align}
Plugging \eqref{eq:stroumph2} into the second line of \eqref{eq:stroumph1}, yields \eqref{eq:stroumph3}.
 \end{proof}

 \section{Efficient extension of the trajectories using the Memorization method}\label{sec:Memo} 

 This section presents the  memorization method that was first introduced in \cite{labeau1996a}. Remember that we considered that a trajectory $\mathbf a_t$  is  preponderant whenever $p_{\mathbf a_t}=\mathbb P(\mathbf Z_t=\mathbf a_t)>0$. Assuming we know  such a preponderant trajectory $\mathbf a_t$, the memorization method allows to generate a trajectory $\mathbf Z_t$  which differs from this preponderant trajectory $\mathbf a_t$. 
  \subsection{Advantage of Memorization over a rejection algorithm}
 The interest of the method, compared to a rejection algorithm, is that we generate a trajectory $\mathbf Z_t\neq \mathbf a_t$ in one shot, whereas a rejection algorithm may generate several times the preponderant trajectory $\mathbf a_t$ before generating a trajectory different from $\mathbf a_t$. This is especially interesting when the probability $p_{\mathbf a_t}=\mathbb P(\mathbf Z_t=\mathbf a_t)$ is close to 1, as, with a rejection algorithm, the average number  of tries to  get a trajectory  different from $\mathbf a_t$    would be   $\frac{1}{1 - p_{\mathbf a_t}}$ which is then  very high. Therefore with a rejection algorithm much computational effort would be wasted generating $\mathbf a_t$ over and over.
 \subsection{Remarks on improving the IPS+M}
 Note that the IPS+M, greatly unbalances the weights of the propagated samples. Consequently it is   useless to consider an algorithm which triggers a resampling step, when the value of on the effective sample size is below a threshold. Indeed, the weights are so unbalanced that the effective sample size would always be very low and would trigger a resampling at each step.

  \subsection{Principle of the memorization method}
  
  \subsubsection{Work with the differentiation time}
The key idea of the memorization method is to consider the stopping time $\tau$ defined such that:\begin{equation}\forall s<\tau ,\quad   Z_s=  a_s \quad  \mbox{and}\quad  Z_{\tau}\neq a_{\tau}.\end{equation} This time $\tau$ is the time at which the trajectory  $\mathbf Z_{t}$  differentiates itself from   $\mathbf a_{t}$. So,  to generate $\mathbf Z_{t}$ knowing $\tau\leq t$ is equivalent to generate $\mathbf Z_{t}$ knowing it   differs from $\mathbf a_t$. In order to simulate a trajectory $\mathbf Z_{t}$ avoiding  $\mathbf a_{t}$, one can follow these three steps:
\begin{enumerate}
    \item generate $\tau$ knowing $ \tau\leq t$, and set $\mathbf Z_{\tau^{-}}=\mathbf a_{\tau^{-}  }$,
    \item generate $Z_{\tau}$ knowing $Z_{\tau}\neq a_{\tau}$,
    \item generate the rest of the trajectory normally until $t$.
\end{enumerate}
These steps are not difficult to realize, except for the first one.
 
\subsubsection{Generate $\tau$ knowing $ \tau\leq t$}

To achieve this first step, the authors in \cite{labeau1996a} propose to generate $\tau$ knowing $ \tau\leq t$   by using a method equivalent to the inverse  transform sampling method.  We present hereafter the theoretical foundation for this method.
We denote by $F$ the cdf of $\tau$ knowing $ \tau\leq t$:
\begin{equation}F(v)=\mathbb P\big(\tau<v | \tau \leq t\big)  ,\end{equation}
and   we denote by $F^{-1}$ its generalized inverse defined by
\begin{equation}
 F^{-1}(x)=\underset{v>0}{\inf}\{v\mid  F(v)\geq x \}. 
\end{equation}
We also denote by $\tilde F$ the  function defined by
\begin{equation}
    \tilde F (v)=\mathbb P (\mathbf Z_{v^-}=\mathbf a_{v^-})=\prod_{k=0}^{n(\mathbf a_v)}\exp\Big[-\Lambda_{a_{s_k}}(t_k)\Big] \prod_{k=1}^{n(\mathbf a_v) }\left(K_{a_{s_k}^-}(a_{s_k})\right)^{\indic{t_k>0}}
\end{equation}
where $\Theta_v(\mathbf a_v)=\big((a_{s_k},t_k)\big)_{0\leq k\leq n(\mathbf a_v)}$.   Note that $\tilde F$ is discontinuous in each jump times $s_k$ where $K_{z_k^-}(z_k)\neq 1$, so the inverse of $\tilde F$ is not necessarily defined everywhere on $[p_{\mathbf a_t},1]$. For this reason we consider $\tilde F^{-1}$, the  generalized inverse of $\tilde F$ defined by\begin{align}
    \tilde F^{-1}(x)=\underset{v>0}{\sup}\{v\mid \tilde F(v)\leq x \}. 
\end{align}  $\tilde F^{-1}$  extends the inverse of $\tilde F$ constantly  where it is not defined, this extension being done from the left so that $\tilde F^{-1}$ is right continuous.

The inverse  transform sampling method consists  in generating $U \sim Unif(0,1)$ and   taking   $F^{-1}(U)$ as a realization of $\tau\,|\, \tau\leq t$ which is a truncated random variable.  The simulation of such random variables is also presented in \cite{devroye1986sample}.   Note that the expression of the cdf $F$ can be related to $\tilde F$ , indeed we have:
\begin{equation}
    \forall v< t,\quad  F(v)=\frac{\mathbb P(\tau<v)}{\mathbb P(\tau \leq t)}=\frac{1-\mathbb P(\tau \geq v)}{1 - \mathbb P(\tau > t)}=\frac{1-\mathbb P (\mathbf Z_{v^-}=\mathbf a_{v^-})}{1- \mathbb P (\mathbf Z_t=\mathbf a_t)} = \frac{1-\tilde F(v)}{1- p_{\mathbf a_t}} .\nonumber
\end{equation}
Consequently we have that
\begin{align}
F^{-1}(U)&=\underset{v>0}{\inf}\{v\mid  F(v)\geq U \}\nonumber\\
&=\underset{v>0}{\sup}\big\{v\mid \tilde F(v)\leq 1-U\big(1-p_{\mathbf a_t})\big)\big\},\nonumber\\
&= \tilde F \Big( 1-U\big(1-p_{\mathbf a_t}\big)\Big).
\end{align} Also, as $U$ has uniform distribution on $[0,1]$,   $\tilde U=1 -U\big(1-p_{\mathbf a_t})$ is a uniform on $[p_{\mathbf a_t},1]$. Therefore,
sampling with the  inverse transform method is equivalent to simulating $\tilde U\sim Unif(p_{\mathbf a_t},1)$  and taking  $\tilde F^{-1}(\tilde U )$ as a realization of $\tau\,|\, \tau\leq t$.

  Assuming we first generate  the trajectory $\mathbf a_{t}$ and generate $\tilde U $  according to a uniform distribution on $(p_{\mathbf a_t},1)$, we now show  how to evaluate $\tilde F^{-1}(\tilde U)$. We consider that during the generation of $\mathbf a_{t}$, we computed   and memorized  $\mathbb P (\mathbf Z_{s_k^-}=\mathbf a_{s_k^-})$ and $\mathbb P (\mathbf Z_{s_k}=\mathbf a_{s_k})$  for each jump in the trajectory, and also for $\mathbb P (\mathbf Z_{t}=\mathbf a_{t})$. Then we   distinguish two cases: either there exists  $k\leq n(\mathbf a_t)$ such that $\mathbb P (\mathbf Z_{s_{k}^-}=\mathbf a_{s_{k}^-})\geq \tilde U> \mathbb P (\mathbf Z_{s_{k}}=\mathbf a_{s_{k}})<$, either there exists  $k\leq n(\mathbf a_t) $ such that $\mathbb P (\mathbf Z_{s_{k} }=\mathbf a_{s_{k} })\geq \tilde U> \mathbb P (\mathbf Z_{s_{k}^1}=\mathbf a_{s_{k+1}^-}) $ where we take the convention that $s_{n(\mathbf a_t)+1}^-=t$. The first case is quite simple as by definition of $\tilde F^{-1}$ we get $\tilde F^{-1}(\tilde U)=s_k$. In  the second case,  $\tilde F $ being continuous and strictly decreasing on $[s_k,s_{k+1})$, it is inversible on this interval, and $\tilde F^{-1} $   corresponds to $F$'s inverse on  $(\tilde F(s_{k+1}^-),\tilde F(s_{k})]$. So $\tilde F^{-1}(\tilde U)\in[s_k,s_{k+1})$ and $\tilde F(\tilde F^{-1}(\tilde U)) =\tilde U$. Notice that
  \begin{equation}
      \forall v \in[s_k,s_{k+1}),\quad \tilde F(v)=\tilde F(s_{k})\times \exp\left[-\Lambda_{a_{s_k}}(v-s_k)\right].
  \end{equation}
  So in particular, for $v=\tilde F^{-1}(\tilde U)$, we have :
  \begin{equation}
       \tilde U=\tilde F(\tilde F^{-1}(\tilde U))=\tilde F(s_{k})\times \exp\left[-\Lambda_{a_{s_k}}(\tilde F^{-1}(\tilde U)-s_k)\right],
  \end{equation}
  or equivalently
    \begin{equation}
      \log \left( \frac{\tilde F(s_{k})} {\tilde U}\right)  =\int_0^{\tilde F^{-1}(\tilde U)-s_k}\lambda_{a_{s_k}}(u)du.
    \end{equation}
  To determine $\tilde F^{-1}(\tilde U)$ we look for the value $s$ such that the integral  $\int_0^{s}\lambda_{a_{s_k}}(u)du$ is equal to $\log \left( \dfrac{\tilde F(s_{k})} {\tilde U}\right)  $ by dichotomy, then we set $\tilde F^{-1}(\tilde U)=s_k+s$.\\

  To sum up the generation of a realization of $\tau|\tau\leq t$  we   proceed as follow: \begin{enumerate}
    \item Generate $\tilde U \sim Unif(p_{\mathbf a_t},1)$, and set $k=0$
    \item If $\mathbb P (\mathbf Z_{s_{k}}=\mathbf a_{s_{k}})\geq \tilde U> \mathbb P (\mathbf Z_{s_{k+1}^-}=\mathbf a_{s_{k+1}^-})$, then we find $s\in[0,s_{k+1}-s_k)$ such that $$\log \left( \frac{\tilde F(s_{k})} {\tilde U}\right)  =\int_0^{s}\lambda_{a_{s_k}}(u)du ,
    $$ and we set $\tau=s_k+s$.
    \item If $ \mathbb P (\mathbf Z_{s_{k+1}^-}=\mathbf a_{s_{k+1}^-})\geq  \tilde U>\mathbb P (\mathbf Z_{s_{k+1}}=\mathbf a_{s_{k+1}})$, then $\tau=s_{k+1} $
    \item If the condition above were not satisfied, set $k=k+1$, if $k\leq n(\mathbf a_s)$ repeat the steps 2 to 4 
\end{enumerate}

\subsection{The memorization in the IPS+M algorithm}
In the IPS+M algorithm, assuming we apply the memorization method on an interval $(\tau_{k\,\text{-}1},\tau_k]$ and for the $i^{th}$ cluster, we apply it knowing that $\mathbf Z_{\tau_{k\,\text{-}1}}=\mathbf a^{i}_{k\,\text{-}1} $. Note that this trajectory $\mathbf a^{i}_{k\,\text{-}1}$ is not necessarily preponderant but when we extend it into $\mathbf a^{i}_{k}$, the piece of trajectory $\mathbf a^{i}_{(\tau_{k\,\text{-}1},\tau_k]})$ is  preponderant because :
\begin{equation}
\mathbb P(\mathbf Z_{(\tau_{k\,\text{-}1},\tau_k]}=\mathbf a^{i}_{(\tau_{k\,\text{-}1},\tau_k]}\big| \mathbf Z_{\tau_{k\,\text{-}1}}=\mathbf a^{i}_{k\,\text{-}1})=\mathbb P(\mathbf Z_{\tau_k}=\mathbf a^{i}_{\tau_k} \big| \mathbf Z_{\tau_{k\,\text{-}1}}=\mathbf a^{i}_{k\,\text{-}1})>0 .
 \end{equation}
So, in the IPS we try to generate trajectories of the cluster that verify $\mathbf Z_{\tau_{k\,\text{-}1}}=\mathbf a^{i}_{k\,\text{-}1}$, but avoid the piece of trajectory  $\mathbf a^{i}_{(\tau_{k\,\text{-}1},\tau_k]}$.

\section{Numerical illustrations}
\label{sec:Res}
\renewcommand{\arraystretch}{1.27}
In order to confirm our results numerically, we have applied the IPS method and the IPS+M method to two two-components system.
\subsection{The heated-room system}
The first system is a room heated by two heaters in passive redundancy.   Heaters are programmed to maintain the temperature of the room  above negative values, turning on when the temperature drops below some positive threshold and turning off when the temperature crosses a high threshold. The second heater can activate only when the first one is failed. The system fails when the temperature falls below zero.

  $X_t$ represents the temperature of the room at time $t$.  $M_t$ represents the status of the heaters   at time $t$.
  Heaters can be on, off, or out-of-order, so $\mathbb M=\{ON,OFF,F\}^{2}$. The state of the system is $Z_t=(X_t,M_t)$.
  
  The differential equation rule the temperature can be derived from the physics. $x_e$ is the exterior temperature. $\beta_1$ is the rate of the heat transition with the exterior. $\beta_2$ is the heating power of each heater. The differential equation giving the evolution of the  the temperature of the room  has the following form: $$\frac{d\,X_t}{dt}=\beta_1 (x_e-X_t)+\beta_2 \mathbbm 1_{M^1_t\, or\, M^2_t=ON}\ .$$  
 
 The heaters are programmed to maintain the temperature within an interval\linebreak$(x_{min} , x_{max})$ where $x_e<0<x_{min}$.  We consider that the two heaters are in passive redundancy in the sense that: when $X\leq x_{min}$ the second heater activates only if the first one is failed. When a repair of a heater occurs, if $X\leq x_{min}$ and the other heater is failed, then the heater status is set to $ON$, else the heater status is set to $OFF$.
 To handle the programming of the heaters, we set \scalebox{0.9}{$\Omega_{m}=(-\infty,x_{max})$} when all the heaters are failed $m=(F,F)$ or when at least one is activated, otherwise we set \scalebox{0.9}{$\Omega_{m}=(x_{min},x_{max})$}. \\
 Due to the continuity of the temperature, the reference measure for the Kernel is $\forall B\in\mathscr B (E),$ $\nu_{(x,m)}(B)=\sum_{m^+\in\mathbb M\backslash\{m\}}\delta_{(x,m^+)}(B)$.
 On the top boundary in $x_{max}$, heaters turn off with probability 1. On the bottom boundary in $x_{min}$, when a heater is supposed to turn on, there is a probability $\gamma=0.01$ that the heater will fail on demand. So, for instance, if $z^-=\big(x_{min}, (OFF,OFF)\big)$, we have $K_{z^-}\big(x_{min}, (ON,OFF)\big)=1-\gamma$, \linebreak and $K_{z^-}\big(x_{min}, (F,ON)\big)=\gamma(1-\gamma)$, and $K_{z^-}\big(x_{min}, (F,F)\big)=\gamma^2$.\\
 Let $j$ be a transition from $m$ to $m^+$. For the spontaneous jumps that happen outside boundaries, if the transition $j$ corresponds to the failure of a heater, then: \linebreak$\lambda^j(x,m)=0.0021+0.00015\times x $ 
and, if the transition corresponds to a repair, then $\lambda^j(x,m)=0.2\quad\mbox{when}\ M^j=F$.  Here the system failure occurs when the temperature of the room falls below zero, so $D=\{(x,m)\in E, x<0\}$. A possible trajectory of the state of this system is plotted in figure \ref{fig:schemetraj}.
The probability of failure $p$ was estimated to $2.71\times 10^{-5}$  thanks to a massive Monte-Carlo of $10^7$ simulations. 
\begin{figure}[h]\centering
 \includegraphics[width= 0.7\linewidth]{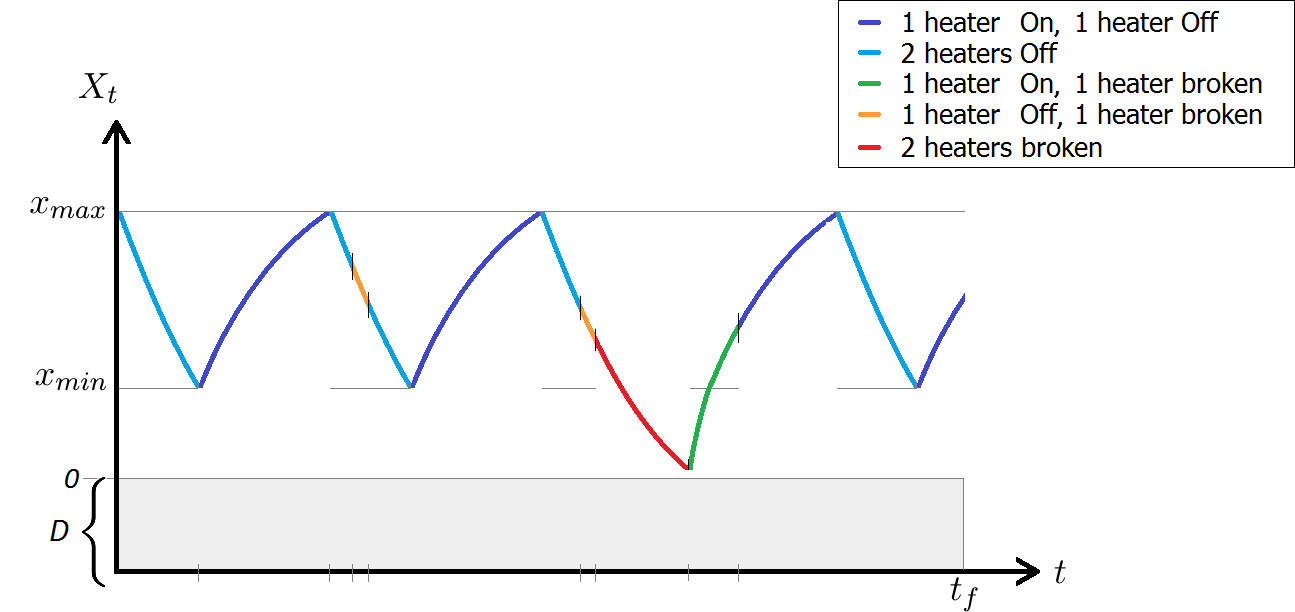} \caption{A possible trajectory of the heated-room system\label{fig:schemetraj}}{\footnotesize(the mode is represented with colors)}
 \end{figure}

\subsection{Results of the simulations for the heated-room system  }

 The results of the simulation study for the heated-room system are displayed in table \ref{table1}. Here we have used the potential functions proposed in section  \ref{sec:chooseG}. The value of $\alpha $ was set to 1.1. We have tried different values of $\alpha $ between $0.5$ and $1.5$ with a step $0.1$. The value of $\alpha=1.1$ was chosen among our trials as the one yielding the best variance reduction for the IPS method.  For the IPS, IPS+M, and MC methods the variances are estimated empirically: we run the methods 100 times and we take the empirical variances of the 100 estimates.  
The results highlight that the IPS method is ill-suited to PDMPs, as it yields a higher variance  than the MC method. Conversely, our IPS+M method performs well and has overcome the issue of the PDMP. Indeed, in the case $n=10$, it reduces the variance by a factor $2.7$ compared to the MC method, and by a factor $10$ compared to the IPS method. 
 
 The IPS+M is about 4 times slower than the IPS method, so, in terms of computational cost, the method is only $2.5$ more efficient than the IPS method on this test case. For a run of $N=10^5$ the IPS+M is  about $2.7$ time slower than Monte-carlo method. So  in terms of computational cost the IPS+M is slightly more efficient than the Monte-Carlo Method. 
 
\begin{table}[h]
\centering
\begin{tabular}{p{0.3cm}cc|c|c|c|p{0.6cm}@{}}
 \cline{4-6}
&  &  & MC  &  IPS  & IPS+M &\\
   \cline{2-6}
& \multicolumn{1}{|c|}{\multirow{2}{*}{ }}   & $\hat{p }$	&	  $\, 2.71\times10^{-5}$  & & &\\ 
&\multicolumn{1}{|c|}{}	 						& $\,\hat \sigma^2$ & $ 2.90\times10^{-10}$	&  &  &\\ 
\cline{2-6}
&\multicolumn{1}{|c|}{\multirow{2}{*}{$n=5$}}   	& $\hat{p }$ &		& $\, 2.86\times10^{-5}$ & $2.70\times10^{-5}$ &\\ 
&\multicolumn{1}{|c|}{}							& $\,\hat \sigma^2$ &	& $\, 1.78\times10^{-9}$ & $1.37\times10^{-10}$ &\\ 
\cline{2-6}
&\multicolumn{1}{|c|}{\multirow{2}{*}{$n=10$}}   & $\hat{p }$ &		& $\, 2.85\times10^{-5}$ & $ 2.64\times10^{-5}$&\\ 
&\multicolumn{1}{|c|}{}	 						& $\,\hat \sigma^2$ &	& $\, 1.08\times10^{-9}$ & $ 1.07\times10^{-10}$ &\\ 
\cline{2-6}
\end{tabular} 
\caption{ Empirical means and empirical variances on $100$ runs with $N=10^5$ for the MC, the IPS and the IPS+M methods\label{table1}}  
 \end{table}
 \subsection{Remark on the SMC with Memorization} 
 We have seen that it is possible to improve the IPS method to make it similar to the SMC method. We may, therefore, think that the IPS+M algorithm could be improved by adding adaptive optional re-sampling steps in order to get a SMC+M algorithm. In practice, however, it is not beneficial to add  these adaptive optional re-sampling steps.  Indeed we have noticed that, as we greatly modify the propagation process, the weights are greatly imbalanced and the effective-sample-size ends up being extremely small, which would trigger the re-sampling each time. Therefore adding adaptive optional re-sampling to the IPS+M has no effect, and in practice the IPS+M methods and the SMC+M methods are the same.

 \subsection{A dam system}
The second system models a dam subjected to an incoming water flow. The physical variable of interest is the water level in the dam denoted by $X_t$. The failure of the system occurs when the water level exceeds a security threshold $x_{lim}= 10$ before time $t_f=50$. The initial level is set to $X_0=0$. The water flow is characterized by the input debit $Q=10$. The dam has two evacuation valves with output debit $Q$. Each valve can be either  open, close or stuck closed. So $\mathbb M =\{Open, Closed, Stuck closed\}^2$. The valves are programmed in passive redundancy, so if the valves are in functioning order there is always one valve open and one valve closed. Though, the valve can get stuck closed and this happens at random times with exponential distribution with intensity $\lambda= 0.001$. The valves are repaired with a repair rate $\mu=0.1$. When both valves are stuck closed the  reservoir of the dam starts filling up according to the equation $\frac{dX_t}{dt}=Q/S $, where $S=10$ is the surface of the reservoir.

\subsection{Results of the simulations for the dam system  }

 The results of the simulation study for the dam system are displayed in table \ref{table2}. Here we have used the potential functions: 
 \begin{equation}
     \forall k<n,\quad G(\mathbf Z_{\tau_k})= \exp\left[\alpha_1(x_{lim}-X_{\tau_k})+\alpha_2(b(Z_{\tau_k})+1)^2\right].
 \end{equation}
The value of $\alpha_1 $ was set to $-0.9$ and the value of $\alpha_2$ was set to $-1$ (these are a priori guesses, we have not tried to use any optimization). For the  IPS and the IPS+M methods the variances are estimated empirically: we run the methods 50 times and we take the empirical variances of the 50 estimates. The results are presented in table \ref{table2}.
 \begin{table}[ht]
\centering
\begin{tabular}{p{0.3cm}cc|c|c|c|p{0.6cm}@{}}
 \cline{4-6}
&  &  & MC  &  IPS  & IPS+M &\\
   \cline{2-6}
& \multicolumn{1}{|c|}{\multirow{2}{*}{ }}   & $\hat{p }$	&	  $\,  1.12 \times10^{-4}$  & & &\\ 
&\multicolumn{1}{|c|}{}	 						& $\,\hat \sigma^2$ & $   1.12\times 10^{-9}$	&  &  &\\ 
\cline{2-6}
&\multicolumn{1}{|c|}{\multirow{2}{*}{$n=5$}}   	& $\hat{p }$ &		& $\, 1.75 \times10^{-4}$ & $ 1.12 \times10^{-4}$ &\\ 
&\multicolumn{1}{|c|}{}							& $\,\hat \sigma^2$ &	& $\, 3.08\times10^{-8}$ & $4.37\times10^{-9}$ &\\ 
\cline{2-6}

\end{tabular} 
\caption{ Empirical means and empirical variances on $50$ runs with $N=10^5$ for the MC, the IPS and the IPS+M methods\label{table2}}  
 \end{table}
 The results highlight that the IPS method is again ill-suited to PDMPs, as it yields a  variance 30 times larger than the MC method. Our IPS+M method performs better than the IPS method as the variance is reduced by a factor 7. Yet on this example the IPS+M method has not overcome the issue of the PDMP, as its variance is 3.4 times larger than the variance of the Monte-carlo estimator.   In terms of computational cost, on this example the IPS+M method was 3.6 times slower than the IPS, and 11.8 times slower then than the Monte-Carlo method. So the efficiency of the IPS+M is about 40 lower than the Monte-Carlo method. Clearly, the implementation of the IPS+M method requires a careful choice of the form of the potential functions and of their parameters.

 \section{Conclusion}
  This paper investigates the application of the IPS method to PDMPs.
 As the IPS method does not perform well when it is used on a concentrated PDMP, we introduce and analyze the IPS+M method, that is a modified version of the IPS that performs better with concentrated PDMP.  The IPS+M method is similar to the IPS but has different propagation steps. Its propagation steps focus on clusters of identical particles rather then on particles individually. For each cluster a memorization method is used to get an empirical approximation of the distribution of the propagated cluster, which allows to greatly improve the accuracy of the method. We have shown that the proposed algorithm yields a strongly consistent estimation, and that this estimation satisfies a TCL. We prove that  the asymptotic variance of the IPS+M estimator is always smaller than the asymptotic variance of the IPS estimator. Simulations also confirm these results, showing that the IPS+M can yield a variance reduction when the IPS cannot. In terms of computational cost, our implementations of the IPS+M method give approximately the same efficiency as the Monte-Carlo method in the examples considered in this paper, where the goal is to estimate a probability of the order of $10^{-5}$ for rather simple toy models. The numerical implementations certainly deserve more careful attention. We also believe that there are ways to improve the efficiency of the IPS+M method by finding a better class of potential functions. Another interesting improvement to the IPS+M method would be to propose an estimator of the variance.
We believe that it should be possible to adapt one of the estimators proposed in \cite{lee2018variance} for the IPS method in order to get an estimator of the variance for the IPS+M estimator. 
 
\newpage
\bibliographystyle{plain}
\bibliography{sample}
\end{document}